\newcommand*{\TickSize}{12pt}
\newenvironment{Figure}
  {\par\medskip\noindent\minipage{\linewidth}}
  {\endminipage\par\medskip}
\newtheorem{theorem}{Theorem}[section]
\newtheorem{lemma}[theorem]{Lemma}
\newtheorem{definition}[theorem]{Definition}
\newcommand{\bigO}{O}
\newcommand{\E}{\mathbf{E}}
\newcommand{\polylog}{\textrm{polylog}}
\lstdefinelanguage{swift}
{
morekeywords={
func,if,then,else,for,in,while,do,switch,case,default,where,break,continue,fallthrough,return,
typealias,struct,class,enum,protocol,var,func,let,get,set,willSet,didSet,inout,init,deinit,extension,
subscript,prefix,operator,infix,postfix,precedence,associativity,left,right,none,convenience,dynamic,
final,lazy,mutating,nonmutating,optional,override,required,static,unowned,safe,weak,internal,
private,public,is,as,self,unsafe,dynamicType,true,false,nil,Type,Protocol,
},
morecomment=[l]{//}, 
morecomment=[s]{/*}{*/}, 
morestring=[b]" 
}
\definecolor{keyword}{HTML}{BA2CA3}
\definecolor{string}{HTML}{D12F1B}
\definecolor{comment}{HTML}{008400}
\def\paragraph#1{\vspace{0.25em}\noindent {\bf #1}} 
\title{Asynchronous Gossip in Smartphone \\ Peer-to-Peer Networks}
\author{
\IEEEauthorblockN{Calvin Newport}
\IEEEauthorblockA{\textit{Georgetown University} \\
Washington, DC, USA \\
cnewport@cs.georgetown.edu}
\and
\IEEEauthorblockN{Alex Weaver}
\IEEEauthorblockA{\textit{Georgetown University} \\
Washington, DC, USA \\
aweaver@cs.georgetown.edu}
\and
\IEEEauthorblockN{Chaodong Zheng}
\IEEEauthorblockA{\textit{Nanjing University} \\
Nanjing, China \\
chaodong@nju.edu.cn}
}
\begin{document}

\maketitle


\begin{abstract}
In this paper, we study gossip algorithms
in communication models that describe the peer-to-peer networking functionality included in most standard smartphone operating systems. 
We begin by describing and analyzing a new synchronous gossip algorithm in this setting 
that features both a faster round complexity and simpler operation than the best-known existing solutions.
We also prove a new lower bound on the rounds required to solve gossip that resolves a minor open question
by establishing that existing synchronous solutions are within logarithmic factors of optimal.
We then adapt our synchronous algorithm to produce a novel gossip strategy for an asynchronous
model that directly captures the interface of a standard smartphone peer-to-peer networking library
(enabling algorithms described in this model to be easily implemented on real phones).
Using new analysis techniques, we prove that this asynchronous strategy efficiently solves gossip.
This is the first known efficient asynchronous information dissemination result for the smartphone
peer-to-peer setting.
We argue that our new strategy can be used to implement effective information spreading
subroutines in real world smartphone peer-to-peer network applications,
and that the analytical tools we developed to analyze it can be leveraged
to produce other broadly useful algorithmic strategies for this increasingly important setting.
\end{abstract}


\begin{IEEEkeywords}
gossip, distributed algorithms, peer-to-peer networks
\end{IEEEkeywords}


\maketitle

\section{Introduction}

In this paper, we study gossip in smartphone peer-to-peer networks,
an interesting emerging networking platform that makes use of the peer-to-peer libraries
 included in standard smartphone operating systems (for examples of these networks in practice, see: \cite{suzuki2012soscast,aloi2014spontaneous,reina2015survey,lu2016networking,holzer2016padoc,firechat,oghostpot}). 
We begin by improving the best-known synchronous gossip algorithms in this setting,
and then build on these results to describe and analyze the first efficient
asynchronous solution. The model in which we study this latter algorithm
 captures the interfaces and guarantees of an actual peer-to-peer networking library
used in iOS, meaning that our gossip solution can be directly implemented on commodity iPhones.
To emphasize this practicality, in Appendix~\ref{apx:implementation} we provide the SWIFT
code that implements this algorithm in iOS---a rare
instance in the study of distributed algorithms for wireless networks in which the gap
between theory and practice is minimal.

Below we briefly summarize the models we study and the relevant existing bounds in these models,
before describing the new results proved in this paper.

\paragraph{The Mobile Telephone Model (MTM).}
The \emph{mobile telephone model} (MTM)~\cite{ghaffari:2016} extends the well-studied 
\emph{telephone model} of wired peer-to-peer networks (e.g.,\cite{telephone1,telephone2,telephone3,chierichetti2010rumour,fountoulakis2010rumor,giakkoupis2014tight})
to better capture the dynamics of the peer-to-peer network libraries implemented in existing smartphone operating systems.
In recent years, multiple distributed algorithm problems have been studied in the MTM setting,
including:  rumor spreading \cite{ghaffari:2016}, load balancing \cite{dinitz:2017}, leader election \cite{newport:2017},
network capacity \cite{dinitz2019capacity}, and gossip \cite{newport:2017b,newport2019random}.

As we elaborate in Section~\ref{sec:synch},
in the MTM, time proceeds in synchronous rounds.
At the beginning of each round, each wireless device (which we will call a {\em node}) 
can {\em advertise} a small amount
of information to its neighbors in the peer-to-peer network topology (defined by an undirected graph).
After receiving advertisements, nodes can attempt local connections.
In more detail, in each round, each node can send and accept at most one {\em connection proposal}.
If a node $u$'s proposal is accepted by neighboring node $v$,
then $u$ and $v$ can perform a bounded amount of reliable communication using this connection before the round ends.

This {\em scan-and-connect} network architecture---in which nodes can broadcast small advertisements to all of their neighbors,
but form pairwise connections with only a limited number at a time---is a defining feature of existing smartphone peer-to-peer libraries.
In the peer-to-peer libraries that depend on Bluetooth, for example, the advertisements are implemented as low energy beacons that contain 
at most tens of bytes, whereas the pairwise connections are implemented as reliable, high throughput links that can
achieve up to 2 Mbits/sec~\cite{barry:2012}. These libraries, therefore, allow devices to broadcast advertisements to all
neighbors, but severely restrict the number of concurrent 
pairwise connections allowed. In iOS, for example, this limit is $7$ (the MTM typically reduces this bound to $1$
to simplify the model description and analysis). 

\paragraph{Mobile Telephone Model vs.~Classical Telephone Model.}
The MTM can be understood as a modification of the classical telephone model of peer-to-peer networks~\cite{telephone1,telephone2,telephone3,chierichetti2010rumour,fountoulakis2010rumor,giakkoupis2014tight}.
The MTM differs from its predecessor in two ways: (1) it allows nodes to broadcast small advertisements to all neighbors;
and (2) it bounds the numbers of concurrent connections allowed at each node.
As elaborated in~\cite{kuhn:bounded,ghaffari:2016},
this second difference prevents existing telephone model results from applying to the mobile telephone setting,
as the best-known telephone model analyses specifically depend on the ability of nodes to service an unbounded
number of incoming concurrent connections (the standard analysis of PUSH-PULL rumor spreading, for example, 
depends on the ability of many nodes to simultaneously pull the rumor from a common informed neighbor).
On the other hand, the addition of advertisements to the MTM means that results in this new model
do not apply to the classical telephone setting, which not include this behavior.
Fundamentally new techniques are needed to study the MTM.

\paragraph{The Asynchronous Mobile Telephone Model (aMTM).}
The mobile telephone model includes synchronized rounds.
This assumption simplifies analyses that probe the fundamental capabilities of scan-and-connect style peer-to-peer networks.
It also introduces, however, a gap between theory and practice,
as real smartphone peer-to-peer networks are not synchronized.
To help close this gap, in~\cite{newport2019random}, the authors introduced
the {\em asynchronous mobile telephone model} (aMTM),
which, as we elaborate in Section~\ref{sec:asynch}, eliminates the synchronous round assumption from the MTM,
and allows communication events to unfold with unpredictable delays, controlled by an adversary.
To increase the practicality of the aMTM,
the authors of~\cite{newport2019random} also provide a software
wrapper around the network libraries offered in iOS
that matches the interface from the formal specification
of the aMTM---simplifying the task of directly implementing algorithms analyzed in the aMTM on iPhones.


\paragraph{Existing Results.}
Work on the MTM began with~\cite{ghaffari:2016}, which studied rumor spreading,
and described a strategy that uses a 1-bit advertisement to compensate for
connection bounds to spread a rumor in 
at most  $O((1/\alpha)\log^2{n}\log^2{\Delta})$ rounds, with high probability,
in a network with $n$ nodes, maximum degree $\Delta$, and vertex expansion $\alpha$ (see Section~\ref{sec:prelim}).
The paper also proved that there exist graphs with good graph conductance, $\phi$, 
for which efficient rumor spreading is impossible.
This creates a separation from the classical telephone model where {\em both} vertex expansion and conductance
are known to be good measures of the ability to spread a rumor efficiently in a graph.
In the classical model, for example, the canonical PUSH-PULL rumor spreading
strategy requires $\Theta((1/\alpha)\log^2{n})$ rounds for graphs with vertex expansion $\alpha$ \cite{giakkoupis2014tight},
and $\Theta((1/\phi)\log{n})$ rounds for graphs with conductance $\phi$ \cite{telephone2}. 

The more general problem of gossiping $k$ rumors in the mobile telephone model
was first studied in~\cite{newport:2017b},
which described an algorithm that spreads the rumors in $O((k/\alpha)\log^5{n})$ rounds,\footnote{In~\cite{newport:2017b}, the algorithm
is listed as requiring $O((k/\alpha)\log^6{n})$ rounds, but that result assumes a single bit advertisements in each round---requiring devices to spell out
control information over many rounds of advertising. To normalize with this paper,
in which tags can contain $\log{n}$ bits, this existing strategy's 
time complexity improves by a $\log$ factor.} with high probability.
This algorithm was {\em one-shot}, in the sense that it cannot accommodate on-going rumor arrivals, or detect when it has terminated.
In recent work~\cite{newport2019random},
a simpler gossip algorithm was described and analyzed that improves this bound to $O((k/\alpha)\log^2{n}\log^2{\Delta})$ rounds,
and can handle on-going rumor arrivals.

By comparison, the best-known gossip solution in the classical telephone model
requires $O(D+\polylog{(n)})$ rounds~\cite{censorhillel2017}. 
This result was considered a breakthrough as it removed the dependence on graph properties
such as expansion or conductance. The solution in~\cite{censorhillel2017}, however,
requires unbounded concurrent connections and unbounded message size (allowing all rumors
in the set difference between two nodes to be delivered during a given one-round connection\footnote{This explains why
the rumor count, $k$, is not needed in the time complexity}). 

The aMTM was introduced in~\cite{newport2019random},
which analyzes a basic asynchronous rumor spreading algorithm,
and prove it requires $O(\sqrt{(n/\alpha)}\cdot \log^2{n\alpha} \cdot \delta_{max})$ time, with high probability,
where $\delta_{max}$ is a sum of the maximum delays on the relevant communication events
(as is standard in asynchronous models, $\delta_{max}$ is unknown to the algorithm
and can change from execution to execution).
For gossip, however,
the paper establishes only a crude deterministic bound of $O(n\cdot k\cdot \delta_{max})$ time to gossip $k$ rumors.
Finding an efficient gossip algorithm in the aMTM was left as the core open question of~\cite{newport2019random},
as such an algorithm could be directly deployed as an information spreading routine in real smartphone peer-to-peer networks.

\paragraph{New Result \#1: Improved Synchronous Gossip.}
Our ultimate goal in this paper is to design and analyze an efficient and simple gossip strategy
for the aMTM. The first step toward this goal is to identify an efficient synchronous strategy that can
be adapted to asynchrony. The existing synchronous gossip algorithm from~\cite{newport2019random}
is not a good candidate for this purpose because it requires nodes to advertise
whether or not they were involved in a connection at any point during the previous $\log{n}$ rounds.
This behavior cannot be easily adapted to an environment with no rounds.

In Section~\ref{sec:synch},
we overcome this issue by describing a simpler strategy we call {\em random diffusion gossip} that does not depend on round history.
This algorithm has each node continually advertise two pieces of information
about its current rumor set: a hash of the set and its size. 
When faced with multiple neighbors with different rumor set hash values,
a node will randomly select a recipient of a connection proposal from among those with the smallest rumor set sizes.
This strategy is easily adapted to asynchrony as it does not explicitly use rounds.

As we show, 
in addition to being both round-independent and pleasingly straightforward in its operation, 
random diffusion gossip is more efficient than the solution from~\cite{newport2019random},
requiring only $O((k/\alpha)\log{n}\log^2{\Delta})$ rounds to spread $k$ rumors.
The source of this speed-up is a new and improved version of the core technical lemma from~\cite{ghaffari:2016},
which bounds the performance of a random matching strategy in bipartite graphs.
Notice that this gossip result also improves the best known result for rumor spreading (i.e., for $k=1$).

Finally, we note that these synchronous gossip bounds are all of the form $\tilde{O}(k/\alpha)$ (where $\tilde{O}$ suppresses
polylogarithmic factors in $n$ and $\Delta$). As argued in the previous work on gossip, 
it might be possible to leverage pipelining to achieve results in $\tilde{O}(k + (1/\alpha))$,
which would make the existing gossip strategies for this model far from optimal in certain cases.
In Section~\ref{sec:synch}, we resolve this open question by proving that $\Omega(k/\alpha)$ is indeed
a lower bound for spreading $k$ rumors in the mobile telephone model.

\paragraph{New Result \#2: Asynchronous Gossip.}
Our synchronous random diffusion gossip algorithm's operation is easily adapted to our asynchronous model.
Adapting its analysis, however, is more complicated.
Like most algorithms studied in the MTM, 
our synchronous analysis of random diffusion gossip relies on the
synchronized behavior of the devices in the network:
fixing for each round a set of potentially productive connections,
and then arguing that a reasonable fraction of these connections will succeed in parallel during the round.

Our first step toward enabling an asynchronous analysis is to divide time into
{\em intervals} of a length proportional to $\delta_{max}$.
These phases are not used by the algorithm (as $\delta_{max}$ is {\em a priori} unknown),
but instead meant only to facilitate our analysis.
As in the synchronous setting,
we fix a set of potential connections at the beginning of each interval.
We show that amidst all the chaotic, asynchronous behavior
that occurs during the interval, for each such connection from some node $u$ to some node $v$ in this set,
one of two things will happen: there will be a point at which $u$ selects a connection from a set that includes
$v$ and that is not too large (keeping the probability of $v$'s selection reasonable), or some other
node will end up connecting with $v$ before $u$ even gets a chance to learn about $v$.

To make use of this probabilistic analysis, we leverage a
rebuilt version of the core randomized matching lemma from~\cite{ghaffari:2016} (discussed above),
that we make not only more powerful but also significantly more friendly to asynchrony. 
In more detail, this new version includes two crucial changes.
First, the original lemma follows the behavior of
a randomized matching strategy over multiple rounds to achieve the needed result.
Our new version, by contrast, requires only a single round, which is necessary to apply to our interval structure, 
as in the asynchronous setting too much can change in the network between intervals to enable a coherent multi-interval graph analysis.
Second, the original version relied on the precise probabilities of particular connections occurring, using both
upper and lower bounds on these values to prove its claim.
Our new version only requires the loose lower bounds on connection probabilities established
by our asynchronous analysis.

Combining these techniques, we are able to
translate the synchronous complexity bound directly to the asynchronous setting,
proving that $k$ rumors spread in at most $O((k/\alpha)\log{n}\log^2{\Delta}\cdot\delta_{max})$ time.

\section{Preliminaries}
\label{sec:prelim}

Here we define useful notation and results that we use throughout the analysis that follows.

\paragraph{Range Notation.}
We use the notation $[m]$ for $1\leq m$ to signify the range of integers $1,\ldots,\lceil m\rceil$. In contrast, we use the notation $[a,b]$, for $a \leq b$, to denote the real numbers from $a$ to $b$.

\paragraph{Graphs and Vertex Expansion.}
Fix an undirected graph $G=(V,E)$.
For node $u\in V$, we  use the notation $N(u)$ to denote $u$'s neighbors in $G$ and $deg(u)=|N(u)|$ to denote $u$'s degree in $G$. 
Let $\Delta=\max_{u\in V}deg(u)$ be the maximum degree of any node in $G$. For a given subset of nodes $S\subseteq V$, let $\partial S=\{v\mid v\in V\setminus S, N(v)\cap S\neq \emptyset\}$ denote the \emph{boundary} of $S$. We then let $\alpha(S)=|\partial S|/|S|$ and define the \emph{vertex expansion} of a graph $G$ as $\alpha =\min_{S\subset V,|S|\in[n/2]}\alpha(S)$.

\par

Let $B(S)$ represent a bipartite graph with bipartitions $(S, V\setminus S)$ and let $v(B(S))$ represent the size of the maximum matching over $B(S)$. We leverage the following lemma from \cite{ghaffari:2016}.

\begin{lemma}
\label{lem:matching}

(Lemma 5.4 of \cite{ghaffari:2016}). Let $\gamma=\min_{S\subset V, |S|\in[n/2]}\{\varv(B(S))/|S|\}$. It follows that $\gamma\geq\alpha/4$.

\end{lemma}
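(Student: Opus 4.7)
The plan is to argue by contradiction using König's theorem. Fix any $S \subset V$ with $|S| \in [n/2]$, and suppose toward contradiction that $v(B(S)) < \alpha|S|/4$. Since $B(S)$ is bipartite, König's theorem gives a vertex cover $C$ of $B(S)$ with $|C| = v(B(S)) < \alpha|S|/4$. Write $C = C_S \sqcup C_\partial$, where $C_S = C \cap S$ and $C_\partial = C \cap (V\setminus S)$, and let $S' = S \setminus C_S$.

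Next I would establish two bounds on $S'$ that will clash. First, since $|C_S| \leq |C| < \alpha|S|/4$, we get $|S'| > |S|(1 - \alpha/4)$; in particular $S' \neq \emptyset$ (noting that without loss of generality $\alpha < 4$, since the inequality $v(B(S))/|S| \geq \alpha/4$ is vacuous once $\alpha \geq 4$ because $v(B(S)) \leq |S|$). Second, every vertex in $V \setminus S$ adjacent to a vertex of $S'$ must lie in $C_\partial$, since otherwise the edge witnessing the adjacency is uncovered by $C$, contradicting that $C$ is a vertex cover of $B(S)$. Combined with the trivial observation that any vertex in $S \setminus S' = C_S$ that neighbors $S'$ lies in $C_S \subseteq C$, this shows $\partial S' \subseteq C_\partial \cup C_S = C$, hence $|\partial S'| \leq |C| < \alpha |S|/4$.

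Finally, because $|S'| \leq |S| \leq n/2$, the definition of vertex expansion applies to $S'$, giving $|\partial S'| \geq \alpha |S'| > \alpha |S|(1 - \alpha/4)$. Chaining this with the upper bound from the previous step yields $\alpha|S|(1 - \alpha/4) < \alpha|S|/4$, i.e., $1 - \alpha/4 < 1/4$, which contradicts $\alpha \leq 3$ (and so certainly contradicts the case $\alpha < 4$ we are operating in). Therefore no such $S$ exists, and $v(B(S))/|S| \geq \alpha/4$ for every $S$ with $|S| \in [n/2]$, proving $\gamma \geq \alpha/4$.

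The main obstacle I anticipate is the bookkeeping around the two different notions of "neighborhood" in play: the boundary $\partial S'$ is taken with respect to the whole graph $G$, whereas the cover $C$ only covers edges of the bipartite graph $B(S)$, not internal edges of $S$. The key observation that makes the argument work is that every vertex in $\partial S'$ either sits in $V \setminus S$ (and is thus forced into $C_\partial$ by the covering property) or sits in $S \setminus S'$ (which is exactly $C_S$ by construction), so the $G$-boundary of $S'$ is absorbed entirely into the $B(S)$-vertex cover $C$.
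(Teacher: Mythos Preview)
The paper does not prove this lemma; it simply quotes it as Lemma~5.4 of~\cite{ghaffari:2016} and uses it as a black box. So there is no ``paper's own proof'' to compare against here.

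Your argument via K\"onig's theorem is the right idea and the main steps are sound: pass to a minimum vertex cover $C$ of $B(S)$, set $S' = S \setminus C_S$, and observe that $\partial S' \subseteq C$ (your case split between $V\setminus S$ and $S\setminus S' = C_S$ is exactly what is needed). Applying the expansion bound to $S'$ then yields the chain of inequalities you wrote.

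There is, however, a genuine gap in the final step. You conclude $1 - \alpha/4 < 1/4$, i.e.\ $\alpha > 3$, and then assert this ``contradicts $\alpha \le 3$ (and so certainly contradicts the case $\alpha < 4$ we are operating in).'' But you never established $\alpha \le 3$; you only assumed $\alpha < 4$, and $\alpha > 3$ is perfectly compatible with $\alpha < 4$. Relatedly, your justification for the WLOG assumption $\alpha < 4$ is incorrect: if $\alpha \ge 4$ the target inequality $\varv(B(S))/|S| \ge \alpha/4 \ge 1$ is not vacuous but rather a hard constraint (it would force a perfect matching on the $S$ side). The clean fix is to observe up front that $\alpha \le 1$ always: taking $|S| = \lceil n/2 \rceil$ gives $|\partial S| \le |V \setminus S| = \lfloor n/2 \rfloor \le |S|$, so $\alpha(S) \le 1$ and hence $\alpha \le 1$. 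With $\alpha \le 1$ in hand, your derived inequality $\alpha > 3$ is an honest contradiction and the proof closes.
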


\paragraph{Useful Probability Results.}
Many of our results are described as holding {\em with high probability} (or, w.h.p.), which
we define to mean with a failure probability polynomially small in the network size $n$.
To help achieve these results, we sometimes apply concentration bounds,
often using the following presentation of the Chernoff bound.

\begin{theorem}
\label{thm:chernoff}

Let $X_1,\ldots,X_n$ be a series of independent random variables such that $X_i\in[0,1]$ where  $X =\sum_{i=1}^nX_i$ has expectation $\E[X]=\mu$. For $\varepsilon\in [0,1]$, $\Pr[X\leq(1-\varepsilon)\cdot\mu]\leq\exp(-(1/2)\cdot\varepsilon^2\mu)$.

\end{theorem}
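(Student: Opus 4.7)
The plan is to prove the stated lower-tail Chernoff bound via the standard moment generating function (MGF) method, tailored to the $[0,1]$-valued case. First I would convert the tail event into a statement about an exponential of $X$: for any parameter $t > 0$, the event $\{X \leq (1-\varepsilon)\mu\}$ coincides with $\{e^{-tX} \geq e^{-t(1-\varepsilon)\mu}\}$. Applying Markov's inequality to the nonnegative random variable $e^{-tX}$ then yields
\[
\Pr[X \leq (1-\varepsilon)\mu] \;\leq\; e^{t(1-\varepsilon)\mu} \cdot \E[e^{-tX}].
\]

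Next I would exploit independence of the $X_i$ to factor the MGF as $\E[e^{-tX}] = \prod_{i=1}^n \E[e^{-tX_i}]$, and then bound each factor using the hypothesis $X_i \in [0,1]$. Because $x \mapsto e^{-tx}$ is convex on $[0,1]$, its chord bound gives $e^{-tx} \leq 1 - (1 - e^{-t})x$ for every $x \in [0,1]$, so $\E[e^{-tX_i}] \leq 1 - (1-e^{-t})\E[X_i]$. Using $1 + y \leq e^y$ on each factor and summing yields
\[
\E[e^{-tX}] \;\leq\; \prod_{i=1}^n \exp\!\bigl(-(1-e^{-t})\E[X_i]\bigr) \;=\; \exp\!\bigl(-(1-e^{-t})\mu\bigr).
\]
Combining with the Markov step gives $\Pr[X \leq (1-\varepsilon)\mu] \leq \exp\!\bigl(t(1-\varepsilon)\mu - (1-e^{-t})\mu\bigr)$, valid for every $t > 0$.

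The last step is to optimize $t$ and clean up. Differentiating in $t$ shows the right-hand side is minimized at $t^\star = -\ln(1-\varepsilon)$, which is positive for $\varepsilon \in (0,1]$. Substituting $t^\star$ yields the classical form
\[
\Pr[X \leq (1-\varepsilon)\mu] \;\leq\; \left(\frac{e^{-\varepsilon}}{(1-\varepsilon)^{1-\varepsilon}}\right)^{\!\mu}.
\]
Taking logarithms, the exponent equals $\mu\bigl(-\varepsilon - (1-\varepsilon)\ln(1-\varepsilon)\bigr)$, and I would finish by showing the analytic inequality $-\varepsilon - (1-\varepsilon)\ln(1-\varepsilon) \leq -\varepsilon^2/2$ for $\varepsilon \in [0,1]$.

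The one genuinely nontrivial step is this final inequality, which I expect to be the main obstacle since it requires more than a one-line Taylor truncation. My approach would be to define $f(\varepsilon) = -\varepsilon^2/2 + \varepsilon + (1-\varepsilon)\ln(1-\varepsilon)$, verify $f(0) = 0$, and then check $f'(\varepsilon) = -\varepsilon - \ln(1-\varepsilon) \geq 0$ on $[0,1)$ by expanding $-\ln(1-\varepsilon) = \sum_{k\geq 1} \varepsilon^k/k \geq \varepsilon$. Monotonicity of $f$ on $[0,1)$ combined with continuity at $\varepsilon=1$ then gives $f(\varepsilon) \geq 0$ throughout, which is equivalent to the required exponent bound and completes the proof.
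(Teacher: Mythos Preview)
Your argument is correct and is the standard moment generating function derivation of the lower-tail Chernoff bound for $[0,1]$-valued independent variables. Note, however, that the paper does not supply its own proof of this theorem: it is stated in the Preliminaries section as a known concentration inequality and simply invoked where needed. So there is no paper proof to compare against; your write-up would serve as a self-contained justification that the paper chose to omit.
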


In several places in our analysis, we tame correlated random variables 
by applying the following stochastic dominance result. This general idea is common,
but we prove the result from scratch here in the exact form we need
for the sake of completeness. The full proof resides in Appendix \ref{proof:stochasticdominance}.

\begin{lemma}
\label{lem:stochasticdominance}

Let $X_1,\ldots,X_T$ be a sequence of $T$ random indicator variables where $X_i= 1$ with some unknown probability $q_i$.
Assume $\forall{i}\in[T]$, it always holds that $q_i\geq p$, for some constant probability $p$. Next, define the total number of successes as $Y=\sum_{i\in[T]}X_i$.
It follows that $Y=\Omega(pT)$ with probability at least $\Omega(1-\exp({-pT}))$.

\end{lemma}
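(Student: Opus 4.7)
The plan is to prove the lemma by coupling the sequence $X_1,\dots,X_T$ to a sequence of independent Bernoulli$(p)$ random variables that it stochastically dominates, and then applying the Chernoff bound from Theorem~\ref{thm:chernoff} to the simpler dominated sequence.

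First I would set up the coupling. Since each $X_i$ is an indicator with $\Pr[X_i = 1 \mid \mathcal{F}_{i-1}] = q_i \geq p$ conditional on the history $\mathcal{F}_{i-1}$ generated by $X_1,\dots,X_{i-1}$, I would realize each $X_i$ using an independent uniform random variable $U_i \in [0,1]$, setting $X_i = 1$ iff $U_i \leq q_i$. On the same probability space, define $Z_i = 1$ iff $U_i \leq p$. Because $p \leq q_i$ always, we have $Z_i \leq X_i$ pointwise, so $Y = \sum_{i} X_i \geq \sum_i Z_i =: Z$. Moreover, since the $U_i$ are independent and the threshold $p$ is a fixed constant, the $Z_i$ are genuinely i.i.d.\ Bernoulli$(p)$, regardless of how the $q_i$ depend on the history. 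This is the conceptual crux of the argument: it launders away the dependency structure of the $X_i$ at the cost of weakening their marginals to $p$.

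Next I would apply Chernoff to $Z$. We have $\E[Z] = pT$, so by Theorem~\ref{thm:chernoff} with $\varepsilon = 1/2$,
\[
\Pr\bigl[Z \leq (pT)/2\bigr] \leq \exp\bigl(-(1/2)(1/4)(pT)\bigr) = \exp(-pT/8).
\]
Combining with the domination $Y \geq Z$,
\[
\Pr\bigl[Y \geq pT/2\bigr] \geq \Pr\bigl[Z \geq pT/2\bigr] \geq 1 - \exp(-pT/8),
\]
which gives $Y = \Omega(pT)$ with probability $1 - \exp(-\Omega(pT))$, as claimed.

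The one subtlety I expect to be the main (mild) obstacle is justifying the coupling cleanly given the informal phrasing of the hypothesis. The lemma statement leaves the dependency structure implicit, saying only that $q_i \geq p$ "always"; the proof needs to interpret this as a lower bound on the conditional probability $\Pr[X_i = 1 \mid X_1,\dots,X_{i-1}]$ (or on the adversary's choice of $q_i$ given the history), since otherwise independence of the $X_i$ would have to be assumed outright and the lemma would reduce immediately to Chernoff. I would make this interpretation explicit at the start of the proof and note that the shared-uniform coupling described above works verbatim in this setting, thereby avoiding any appeal to independence of the $X_i$ themselves.
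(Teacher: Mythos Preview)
Your proposal is correct and follows essentially the same approach as the paper: couple the $X_i$ to i.i.d.\ Bernoulli$(p)$ variables that they dominate pointwise, then apply Theorem~\ref{thm:chernoff} with $\varepsilon=1/2$ to the dominated sum. The only cosmetic difference is the coupling mechanism---you use a single uniform threshold $U_i\le p$ versus $U_i\le q_i$, while the paper uses a two-stage construction (first flip a $p$-coin, then top up with probability $(q_i-p)/(1-p)$)---but these are equivalent, and your explicit treatment of the conditional-on-history interpretation is if anything cleaner than the paper's.
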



\section{Synchronous Gossip}
\label{sec:synch}
In this section, we analyze new upper and lower bounds for
gossip in the synchronous MTM.

\subsection{The Mobile Telephone Model}

The mobile telephone model (MTM) (introduced in~\cite{ghaffari:2016}) describes a peer-to-peer network of wireless devices. The network is modeled as an undirected graph $G=(V,E)$, where each device $u$ is represented by a vertex in the graph. We will use the term \emph{node} to refer to both the device and the corresponding vertex in the graph. If two devices $u$ and $v$ are within communication range in the network, we connect the corresponding nodes with an undirected edge $\{u,v\}\in E$. We denote the number of nodes in the graph as $n=|V|$.

\par

Time in the MTM proceeds in synchronous rounds with all nodes beginning at round $1$. In each round, each node begins 
by broadcasting an advertisement containing $O(\log{n})$ bits to its neighbors in $G$.
After receiving advertisements,
each node can decide to send a {\em connection proposal} to at most one neighbor.
Any node that receives one or more proposals must accept exactly one. 
We allow the model to arbitrarily select which proposal is accepted in this case. 
(That is, we do not
necessarily assume that each node successfully receives {\em all} incoming proposals and is therefore able
to make a careful decision on which to accept.)

\par

Finally, if some node $v$ accepts a connection proposal from neighboring node $u$, 
then $u$ and $v$ are considered {\em connected}.
They can then perform a bounded amount of interactive and reliable communication before the round concludes.
Notice, this model definition limits each node to participating in at most $2$ connections per round (one outgoing and one incoming).

\subsection{The Gossip Problem}
The gossip problem we study assumes
that $k \geq 1$ gossip rumors (also called {\em tokens} in the following) are distributed arbitrarily 
to nodes at the beginning of the execution (that is, some nodes can start with many tokens, some can start with none).
The problem is solved once all nodes know all $k$ rumors.
Nodes do not know $k$ in advance.
We treat the gossip tokens as comparable black boxes.
The only way for a node $u$ to communicate a token to node $v$ is if $u$ and $v$ are connected.
In the synchronous setting, 
we limit nodes to communicating at most a constant number of tokens over a given connection in a single round.
(Later, when we study this problem in the asynchronous setting, we instead bound the maximum time required
to transmit a single token over a connection.)


\subsection{The Random Diffusion Gossip Algorithm}

Here we present the \emph{random diffusion gossip} algorithm which we formalize as pseudocode in Algorithm \ref{alg:synchalg}. The core strategy of this algorithm is for nodes to attempt to send tokens to the neighbors with the smallest token sets.
This contrasts to the strategy of~\cite{newport2019random} in which nodes bias connection attempts toward neighbors
that have not participated in connections in recent rounds.


\par

\begin{algorithm}
\SetAlgoNoLine
\caption{Random diffusion gossip (for process $u$)}
\label{alg:synchalg}
$T\gets$ initial token set of $u$\;
$H\gets$ shared hash function\;

\While {$\tt{true}$}
{
$\tt{Advertise}(\langle$$H(T)$, $|T|$, $u$$\rangle)$\;
$A \gets \tt{ReceiveAdvertisements}()$\;
\;
$s\gets \min{(\{s_v\mid \langle h, s_v, *\rangle \in A, h\neq H(T) \})}$\;
$N\gets \{v\mid\langle h, s, v\rangle\in A, h\neq H(T) \}$\;
$v\gets$ node chosen randomly from $N$\;
\textit{(attempt to connect to $v$; if successful, send/receive a token from the set difference)}\;
}
\end{algorithm}

In more detail, in each round, each node $u$ advertises a \textit{hash} of its token set, 
the \textit{size} of its token set, and its unique identifier. \footnote{As in~\cite{newport2019random},
a couple of simplifying assumptions are made here. The first is that we avoid hash collisions in the executions
we consider, allowing us to make the reasonable assumption that different token set hashes always
indicate different token sets. We also make the pragmatic assumption that these hash vaues, as well as token set size counts,
fit within the $O(\log{n})$ bound on advertisements.} Node $u$ then considers advertisements from neighbors that advertised different token set hashes,
identifying the smallest token set size from this set. It randomly selects one of these nodes 
to send a connection proposal.
If the proposal is accepted, a token from the set difference is transferred, increasing at least
one of the two nodes' token sets.


\subsection{Analysis}

Our goal is to prove the following bound on the time complexity of this algorithm.

\begin{theorem}
\label{thm:syncalg}

With high probability in $n$, the random diffusion gossip algorithm solves gossip in $\bigO((k/\alpha)\log{n}\log^2{\Delta})$ rounds,
where $k$ is the number of initial tokens, $\alpha$ is the vertex expansion of the graph, $n$ is the size of the graph, and $\Delta$ is the maximum degree of the graph.

\end{theorem}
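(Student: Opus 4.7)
The plan is to reduce to the spread of a single token and then union-bound over tokens. Fix a token $t$ and let $S_t^r$ denote the set of nodes holding $t$ at the start of round $r$. I aim to prove that $S_t^r = V$ within $T = O((k/\alpha)\log n \log^2 \Delta)$ rounds with probability $1 - 1/n^{\Theta(1)}$; a union bound over the (polynomially many) tokens then yields the theorem. To analyze a single token, partition its execution into $O(\log n)$ \emph{doubling phases}: in phase $i$, either $|S_t|$ is meant to grow from $\Theta(2^i)$ to $\Theta(2^{i+1})$ (while $|S_t| \le n/2$), or $|V \setminus S_t|$ is meant to halve (afterwards). It suffices to show that each phase lasts at most $L = O((k/\alpha)\log^2 \Delta)$ rounds w.h.p.

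For the per-phase argument, fix a round $r$ with $s = |S_t^r|$ and (without loss of generality) $s \le n/2$. Lemma \ref{lem:matching} guarantees a matching $M$ in $B(S_t^r)$ of size $\Omega(\alpha s)$. I would then invoke the paper's improved version of the randomized matching lemma to conclude that $\Omega(|M|/\log^2 \Delta) = \Omega(\alpha s / \log^2 \Delta)$ matching edges succeed as actual connections in this single round. The workhorse there is a degree-class decomposition: for a matched pair $(u,v)$ in class $c = \lceil \log \deg(u) \rceil$, the probability that $u$'s proposal reaches $v$ is at least $1/2^{c+1}$; summing across the $O(\log\Delta)$ classes and pairing this with a proposal-acceptance analysis yields the $\log^2 \Delta$ loss. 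To convert a successful connection into a transfer of $t$ specifically, observe that on any edge with $u \in S_t^r$ and $v \notin S_t^r$, the algorithm exchanges one token drawn from $T_u \triangle T_v$, and since $|T_u \triangle T_v| \le k$, the drawn token equals $t$ with probability at least $1/k$. Hence the count of new $t$-holders per round stochastically dominates a variable with expectation $\Omega(\alpha s / (k \log^2 \Delta))$, and Lemma \ref{lem:stochasticdominance} applied over the $L$ rounds of the phase delivers $\Omega(\alpha s)$ fresh $t$-holders, at least doubling $|S_t|$.

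The main obstacle I anticipate is showing that the \emph{minimum-size, different-hash} selection rule does not systematically deflect proposals away from the matching edges in $M$. For example, a node $u \in S_t^r$ may have a neighbor $w \in S_t^r$ whose token set is strictly smaller than $u$'s matching partner $v \in V \setminus S_t^r$; both $w$ and $v$ have hashes different from $u$'s, but $w$ wins the size tiebreak and steals the proposal. Recovering an $\Omega(1/\deg(u))$ proposal probability along matching edges in spite of this bias is the central content of the reworked matching lemma and where the bulk of the technical effort resides; one likely has to argue either that such $w$-style deflections remain productive for some other token (so a global potential still decreases enough to amortize the stolen rounds), or that the candidate set of minimum-size different-hash neighbors of $u$ has size within a polylogarithmic factor of $\deg(u)$. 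Once that lemma is in hand, the rest of the proof unfolds as the standard doubling-phase plus union-bound template sketched above.
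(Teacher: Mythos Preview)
Your per-token reduction does not work with this algorithm's selection rule, and the obstacle you flag is not a technicality but the whole point. Nodes propose only to neighbors advertising the \emph{globally smallest} token-set size among their different-hash neighbors. For a fixed token $t$ and a matching edge $\{u,v\}$ across the cut $(S_t^r, V\setminus S_t^r)$, there is no reason $v$ should have the minimum size in $u$'s neighborhood; it is easy to build instances where $v$ has a large token set (just missing $t$) while $u$ has a neighbor $w$ with very few tokens, so $u$ proposes to $w$ with probability $1$ and to $v$ with probability $0$, round after round. Neither of your proposed rescues survives: the candidate set $\hat N$ need not be within any factor of $\deg(u)$ (it can be a single node), and ``the deflection is productive for some other token'' abandons the per-token potential entirely and forces you back to a global argument. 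Your $1/k$ step is also unjustified as stated, since the algorithm merely transfers \emph{some} token from the set difference; it is not specified to be uniformly random.

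The paper's proof avoids all of this by aligning the analyzed cut with the algorithm's bias. It tracks $S_{\min}(r)$, the set of nodes currently holding the fewest tokens, and the bipartite \emph{minimum productive subgraph} $G_{\min}(r)$ between $S_{\min}(r)$ and its boundary. The selection rule then guarantees that every node in $L_{\min}(r)$ aims its proposal \emph{exactly} into $R_{\min}(r)\subseteq S_{\min}(r)$, so the matching lemma (Lemma~\ref{lem:prodconns}, iterated over $O(\log\Delta)$ rounds in Lemma~\ref{lem:fullprogress}) applies to the edges the algorithm actually uses. Each productive connection pushes a node out of $S_{\min}(r)$, and a doubling/halving argument (Lemmas~\ref{lem:sizeincrease} and~\ref{lem:finishsizeincrease}) shows the minimum token-set size increases within $O((1/\alpha)\log n\log^2\Delta)$ rounds; this can happen at most $k$ times. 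The factor of $k$ thus arises from the number of levels the minimum size must climb, not from a $1/k$ chance of transmitting a particular token.
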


We begin by defining some useful notation. At the beginning of round $r$, let $T_u(r)$ be the token set of node $u$ and let $s_u(r)$ be the minimum token set size among $u$'s neighbors. Furthermore, for a fixed topology graph $G=(V,E)$,
let $N(u)$ be the neighbors of $u$ in $G$ and let $N_u(r)$ be the \emph{productive} neighbors for $u$ at the beginning of round $r$,
where we define $N_u(r)=\{v\mid v\in N(u),|T_v(r)|=s_u(r), H(T_u(r))\neq H(T_v(r))\}$. 

\par

For integer sizes $i\in 0,\ldots,k$; let $S_i(r)=\{v\mid v\in V, i=|T_v(r)|\}$ be the set of nodes that know exactly $i$ tokens at the beginning of round $r$. Next, let $n_i(r)=|S_i(r)|$ and $n^*_i(r)=\min{(n_i(r), n-n_i(r))}$. 

\par

We also define $i_{min}(r)=\min(\{i\mid i\in 0,\ldots,k]\mid n_i(r) > 0\})$ as the minimum token set size for which there is at least one node with exactly that many tokens. For convenience, let $S_{min}(r)=S_j(r)$ and $n^*_{min}(r)=n^*_j(r)$ for $j=i_{min}(r)$. Finally, we define $C(r)=|\{i\mid i\in 0,\ldots,k]\mid n_i(r)>0\}|$ as the number of token set sizes held by nodes.

\par

The approach we will take when proving our theorem statement is to bound how long any minimum token set size $i_{min}(r)$ can remain the minimum token set size. Since the minimum token set size can never decrease, this will then allow us to prove the total time complexity for our algorithm. For most of our analysis, we will focus on the connections between nodes in $S_{min}(r)$ and $V\setminus S_{min}(r)$. In order for this cut to exist though, clearly it must be the case that $C(r)>1$. Therefore we quickly handle the case where $C(r)=1$, the proof of which can be found in Appendix \ref{proof:samesize}.

\begin{lemma}
\label{lem:samesize}

Fix a round $r > 0$ such that $C(r) = 1$. Either $C(r+1) > 1$ or $i_{min}(r+1)>i_{min}(r)$.

\end{lemma}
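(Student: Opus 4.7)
The plan is to start from the hypothesis $C(r) = 1$, which forces every node to share a single token-set size $i := i_{min}(r)$, and then use a token-preservation argument to produce a productive connection during round $r$. Tokens are never removed from a node's set, so every token present at round $0$ is still held by some node at round $r$; combined with $C(r)=1$, this rules out the possibility that every node has an identical token set whenever gossip has not yet completed, i.e., whenever $i<k$. (When $i=k$ the statement is vacuous, as gossip is already solved and the algorithm's progress is no longer of interest.) Hence there exist two nodes $u$ and $v$ with $|T_u(r)| = |T_v(r)| = i$ but $T_u(r) \neq T_v(r)$, and by connectivity of the underlying graph---implicit in $\alpha > 0$---we may take such a pair to be adjacent. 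Since their advertised hashes differ, we have $v \in N_u(r)$ and $u \in N_v(r)$.

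Next I would argue that at least one productive connection must form during round $r$. Because $N_u(r)$ is non-empty, $u$ sends a proposal to some productive neighbor. The MTM rules require any node that receives at least one proposal to accept exactly one, so the recipient of $u$'s proposal ends up in a connection with a productive neighbor of its own. Call the resulting connected pair $(x,y)$: by construction $|T_x(r)| = |T_y(r)| = i$ and $T_x(r) \neq T_y(r)$, so the symmetric difference is non-empty and each of $x$ and $y$ owns a token the other does not. Exchanging a token from this set difference enlarges the token set of at least one of them, so at the start of round $r+1$ at least one node has size strictly greater than $i$.

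I would then close with a straightforward case split. A node's size can never decrease and can change by at most a bounded amount per round (each node takes part in at most two connections, each carrying a constant number of tokens), so at round $r+1$ every size lies in $\{i, i+1, i+2, \ldots\}$, and by the previous step at least one size strictly exceeds $i$. If any node still has size exactly $i$, then at least two distinct sizes coexist and $C(r+1) > 1$. Otherwise every node has size at least $i+1$, giving $i_{min}(r+1) > i_{min}(r)$. Either way the claimed dichotomy holds.

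The only real subtlety is the boundary case where gossip has already terminated ($i = k$); there no productive neighbors exist, no connections form, and the lemma's conclusion would fail literally. The token-preservation argument is precisely what flags this case as vacuous for the analysis. The other point worth checking carefully is the mandatory-acceptance rule of the MTM: it is this forced acceptance of some proposal by any node that receives one, rather than any probabilistic matching argument, that supplies the single productive connection the proof needs.
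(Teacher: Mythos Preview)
Your proposal is correct and follows essentially the same approach as the paper's proof: both exploit $C(r)=1$ to find an adjacent pair with equal-size but differing token sets, invoke the MTM rule that any node receiving a proposal must accept one, and conclude that some node's token set strictly grows, forcing the dichotomy. Your write-up is in fact slightly more careful in two places---you explicitly justify (via token preservation and connectivity) why such an adjacent pair exists when $i<k$, and you correctly say the exchange enlarges \emph{at least one} of the two endpoints' sets, whereas the paper asserts it is specifically the recipient $w$ that gains a token.
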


The purpose of Lemma \ref{lem:samesize} is to simply establish that regardless of the minimum token set size, there are some nodes which quickly achieve a token set larger than the minimum number of tokens held by any node. This allows us to analyze the cut between these nodes in $V\setminus S_{min}(r)$ and the nodes that still possess exactly $i_{min}(r)$ tokens, $S_{min}(r)$. Productive connections made over this cut will provide nodes of $S_{min}(r)$ with new tokens, increasing their token set size, and shrinking $S_{min}(r)$. When no nodes remain, the minimum token set size must be larger than $i_{min}(r)$.

\par

Furthermore, note that if for some rounds $r_1$ and $r_2$ such that $C(r_1)>1$, $C(r_2)=1$, and $r_1<r_2$ it must be the case that every node in $S_{min}(r_1)$ has participated in a productive connection. Therefore we will continue our analysis with the assumption that $C(r)>1$ for each round $r$ we fix and revisit Lemma \ref{lem:samesize} in the proof of Theorem \ref{thm:syncalg}.

\smallskip

We continue by defining the \emph{productive subgraph} $G(r)$ of $G$ (defined with respect to a fixed round $r$) which defines all the connections which nodes might attempt to form in the given round $r$.

\begin{definition}
\label{def:prodsubgraph}

At the beginning of round $r>0$, define the \textbf{productive subgraph} $G(r)$ of the graph topology $G=(V,E)$ as the undirected graph $G(r)=(V,E(r))$ such that $E(r)=\{\{u,v\}\mid v\in N_u(r)\}$.

\end{definition}

For the purposes of our analysis, it will be sufficient to focus on a subgraph of the productive subgraph which only considers nodes in $S_{min}(r)$ and their neighbors.

\begin{definition}
\label{def:minprodsubgraph}

At the beginning of round $r>0$, define the \textbf{minimum productive subgraph} $G_{min}(r)$ as the undirected bipartite subgraph $G_{min}(r)=(L_{min}(r),R_{min}(r),E_{min}(r))$ such that

\begin{itemize}
    \item $L_{min}(r) = \{u\mid u\in V\setminus S_{min}(r),N(u)\cap S_{min}(r)\neq \emptyset\}$
    \item $R_{min}(r) = \{u\mid u\in S_{min}(r),N(u)\cap (V \setminus S_{min}(r))\neq \emptyset\}$
    \item $E_{min}(r) = \{\{u,v\}\mid u\in L_{min}(r), v\in R_{min}(r), \{u,v\}\in E(r)\}$
\end{itemize}

\end{definition}

In other words, the minimum productive subgraph $G_{min}(r)$ only contains edges representing the potential connections which would result from connection proposals sent to nodes with the fewest number of tokens in the entire network at the beginning of round $r$ (from nodes with more than this number of tokens). The significance of $G_{min}(r)$ is that every productive connection in this graph causes a node with the fewest number of tokens to no longer have the fewest number of tokens. For this reason, we next lower bound the number of potential productive connections in $G_{min}(r)$. The full proof for this lemma can be found in Appendix \ref{proof:minmatching}.
\begin{lemma}
\label{lem:minmatching}

For a fixed round $r>0$, there is a matching over $G_{min}(r)$ with size $m\geq (\alpha/4)\cdot n^*_{min}(r)$.

\end{lemma}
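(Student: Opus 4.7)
The strategy is to reduce the claim to Lemma~\ref{lem:matching} applied to the bipartite graph $B(S_{min}(r))$ induced by the cut $(S_{min}(r), V\setminus S_{min}(r))$. The key structural observation is that every $G$-edge crossing this cut already lies in $E_{min}(r)$. Indeed, fix an edge $\{u,v\}\in E$ with $u\in V\setminus S_{min}(r)$ and $v\in S_{min}(r)$. Since $|T_v(r)|=i_{min}(r)$ is the global minimum token-set size and $v\in N(u)$, we have $s_u(r)=i_{min}(r)$; moreover $|T_u(r)|>i_{min}(r)=|T_v(r)|$, so under the standing no-hash-collision assumption $H(T_u(r))\neq H(T_v(r))$. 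Hence $v\in N_u(r)$, which places $\{u,v\}$ in $E(r)$ and therefore in $E_{min}(r)$. It follows that $G_{min}(r)$ is exactly the bipartite graph obtained from $B(S_{min}(r))$ by discarding its isolated vertices, so the two graphs share the same maximum matching size.

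Having made this identification, I would finish by a short case split depending on which side of the cut is smaller. If $n_{min}(r)\leq n/2$, then $n^*_{min}(r)=n_{min}(r)=|S_{min}(r)|$, and Lemma~\ref{lem:matching} directly yields $v(B(S_{min}(r)))\geq(\alpha/4)\cdot|S_{min}(r)|=(\alpha/4)\cdot n^*_{min}(r)$. If instead $n_{min}(r)>n/2$, set $S'=V\setminus S_{min}(r)$, so that $|S'|=n-n_{min}(r)\leq n/2$ and $n^*_{min}(r)=|S'|$. Since $B(S')$ has the same vertex set and the same cross-cut edges as $B(S_{min}(r))$ (merely with the sides relabeled), their maximum matchings coincide, and Lemma~\ref{lem:matching} applied to $S'$ gives $v(B(S_{min}(r)))=v(B(S'))\geq(\alpha/4)\cdot|S'|=(\alpha/4)\cdot n^*_{min}(r)$. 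In either case the desired matching in $G_{min}(r)$ is obtained.

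\textbf{Main obstacle.} The only non-bookkeeping step is the first one: showing that the hash/size filter defining $N_u(r)$ does not delete any of the cross-cut edges we need. This is why the standing assumption $C(r)>1$ matters — it guarantees $S_{min}(r)$ is a proper subset of $V$, so that the token-set size (and hence the hash) strictly differs across every cut edge, making every such edge productive. Once this is secured, the bound follows from Lemma~\ref{lem:matching} with no additional work beyond the $|S|\leq n/2$ bookkeeping handled by passing to the complement when necessary.
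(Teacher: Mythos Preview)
Your proposal is correct and follows essentially the same approach as the paper's own proof: apply Lemma~\ref{lem:matching} to the smaller side of the cut $(S_{min}(r), V\setminus S_{min}(r))$ to obtain a matching of size at least $(\alpha/4)\cdot n^*_{min}(r)$, and then verify that every cross-cut edge is productive (hence lies in $E_{min}(r)$) because the $S_{min}$-endpoint has the global minimum token-set size. The only cosmetic differences are that the paper first extracts the matching and then checks its edges are in $E_{min}(r)$ (rather than first identifying $G_{min}(r)$ with $B(S_{min}(r))$), and it handles the $n^*_{min}(r)$ bookkeeping implicitly rather than via your explicit case split.
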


We now have a lower bound for the number of potential connections that nodes in $S_{min}(r)$ could participate in for  a given round. To show that our algorithm is able exploit these possible connections, 
we now prove and apply a significantly reworked version of a core lemma from \cite{ghaffari:2016}
which bounds the behavior of randomized connection attempts in bipartite graphs satisfying certain
properties. In the immediate context of our synchronous analysis,
this new version of the lemma provides a log-factor time complexity improvement
as compared to the original version.
As detailed in the introduction, however, most of the updates captured below (which represent
some of the core technical contributions of this paper)
are introduced to
make this lemma applicable to the asynchronous analysis that follows in the next section.

We also note that that this improved version of the lemma can be plugged into the
analysis of \cite{ghaffari:2016} to provide a log factor improvement to the
complexity of its rumor spreading algorithm.

\begin{lemma}
\label{lem:prodconns}

(Replaces Theorem 7.4 in \cite{ghaffari:2016}). Let $G(L,R)$ be the subgraph of $G_{min}(r)$ induced by node subsets $L$ and $R$ and let $N_{L,R}(u)$ is the neighbors node $u$ in $G(L, R)$ and $deg_{L,R}(u)=|N_{L,R}(u)|$. Fix any $i\in [32\cdot\log{\Delta}]$. For a fixed round $r>0$, let $L\subseteq L_{min}(r)$ and $R\subseteq R_{min}(r)$ be subsets such that:
\begin{enumerate}
    \item there is a matching of size $|L|$ over $G(L,R)$,
    \item $|R|\geq |L|\geq c\cdot m$ for some $0<c\leq 1$ where $m$ is the size of the maximum matching over $G(L,R)$,
    \item $\sum_{u\in L}deg_{L,R}(u) \leq m\Delta^{1-\frac{i-1}{32\cdot\log{\Delta}}}$, and
    \item for every $u\in L$, every neighbor of $u$ in $R_{min}(r)$ is in $R$.
\end{enumerate}
With at least constant probability within one round of the random diffusion gossip algorithm,

\begin{enumerate}
    \item At least $\Omega\Big(\frac{m}{\log{\Delta}}\Big)$ nodes of $R$ participate in a productive connection, or
    \item We can identify $L''\subseteq L\cap L_{min}(r')$ and $R''\subseteq R\cap R_{min}(r')$ for some $r'\in\{r, r+1\}$ such that:
    \begin{enumerate}
        \item  there is a matching of size $|L''|$ over $G(L'', R'')$,
        \item $|R''|\geq |L''|\geq (1-1/\log{\Delta})^2\cdot|L|$,
        \item $\sum_{u\in L''}deg_{L'',R''}(u)\leq m\Delta^{1-\frac{i}{32\cdot\log{\Delta}}}$, and
        \item for every $u\in L''$, every neighbor of $u$ in $R_{min}(r')$ is in $R''$.
    \end{enumerate}
\end{enumerate}

\end{lemma}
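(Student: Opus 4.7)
The plan is to analyze one round of random diffusion gossip restricted to $G(L,R)$. By property 4, every $u \in L$ proposes to a uniformly random member of $N_{L,R}(u)$, so for each $v \in R \cap N(u)$ the event ``$u$ proposes to $v$'' has probability exactly $1/\deg_{L,R}(u)$, and these events are mutually independent across $u$. Using the standard inequality $1-\prod(1-p_j) \geq \tfrac{1}{2}\min\{1,\sum p_j\}$, each $v \in R$ receives at least one proposal from $L$ (and therefore accepts some proposal and participates in a productive connection) with probability at least $\tfrac{1}{2}\min\{1,\sum_{u\in N_{L,R}(v)} 1/\deg_{L,R}(u)\}$.

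Fix the matching $M$ of size $|L|$ granted by property 1 and call a vertex $v \in M(L)$ \emph{saturated} if the inner sum above is at least $1/\log\Delta$, and \emph{sparse} otherwise. I would split into two cases. \emph{Case A:} if at least $m/2$ matched vertices are saturated, each is covered with probability $\Omega(1/\log\Delta)$; applying Lemma~\ref{lem:stochasticdominance} to these (dominated, independent) covering indicators converts this into $\Omega(m/\log\Delta)$ covered vertices with constant probability, yielding Conclusion 1. \emph{Case B:} otherwise at least $(1-1/\log\Delta)|L|$ matched vertices are sparse; call this set $R^\star$ and let $L^\star = M^{-1}(R^\star)$, which inherits a matching of its full size by restriction of $M$.

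In Case B the task is to carve out $(L'',R'')$ from $(L^\star,R^\star)$ meeting properties 1'--4'. Summing the sparsity condition over $R^\star$ shows that the edge mass from $L^\star$ into $R^\star$, weighted by $1/\deg_{L,R}(u)$, is tiny; combined with property 3 this lets me prune at most an additional $1/\log\Delta$ fraction of the highest-degree $u \in L^\star$ to bring $\sum_{u \in L''}\deg_{L'',R''}(u)$ below the target $m\Delta^{1-i/(32\log\Delta)}$, while keeping $|L''| \geq (1-1/\log\Delta)^2 |L|$ and preserving the restricted matching (properties 1', 2', 3'). The witness round $r'$ is then selected to secure property 4': if some $u \in L''$ has an $R_{min}(r)$-neighbor that was pruned from $R''$, this neighbor must have received a proposal in round $r$ and joined a larger token class, so it is no longer in $R_{min}(r+1)$, and taking $r'=r+1$ closes the gap; otherwise $r'=r$ works. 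The main obstacle is exactly this closure step: bounding the cascade of forced exclusions (nodes in $L''$ whose $R_{min}(r')$-neighborhood spills outside $R''$, in turn forcing further shrinkage) against the slim $(1-1/\log\Delta)^2$ slack is the delicate accounting, and it is what necessitates the switch, emphasized in the introduction, from the tight two-sided probability control used in the original Ghaffari--Newport lemma to the one-sided lower-bound control our asynchronous analysis is forced to supply.
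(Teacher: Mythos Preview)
Your Case B is where the argument breaks down. The sparsity condition $\sum_{u\in N_{L,R}(v)} 1/\deg_{L,R}(u) < 1/\log\Delta$, summed over $v\in R^\star$, gives $\sum_{u} \deg_{L,R^\star}(u)/\deg_{L,R}(u) < |R^\star|/\log\Delta \leq m/\log\Delta$: a bound on a \emph{weighted} edge count. To convert this to the unweighted sum $\sum_{u}\deg_{L,R^\star}(u)$ you multiply by your degree cap $\delta_i = (\log\Delta/c)\,\Delta^{1-(i-1)/(32\log\Delta)}$, landing at $(m/c)\,\Delta^{1-(i-1)/(32\log\Delta)}$, which is \emph{larger} than the hypothesis bound, not smaller by the required $2^{1/32}$ factor. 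So sparsity alone does not drive the degree sum down. There is a second, independent problem with property~4$'$: your $R^\star$ contains only matched vertices, so every unmatched $v\in R$ is excluded; such a $v$ need not receive any proposal in round $r$ (nothing forces it to), hence it can remain in $R_{\min}(r+1)$, and any $u\in L''$ adjacent to it violates 4$'$ at $r'=r+1$. The ``cascade'' you flag at the end is not just delicate accounting---it is unbounded under your construction. (Also, the thresholds in your case split are inconsistent: ``$\geq m/2$ saturated'' versus ``$\geq(1-1/\log\Delta)|L|$ sparse'' are not complementary; and the covering indicators in Case~A are negatively correlated, not independent, so Lemma~\ref{lem:stochasticdominance} does not literally apply, though Chernoff for negatively correlated variables does.)

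The paper's route is structurally different. It first prunes the top $1/\log\Delta$ fraction of high-degree nodes from $L$ (obtaining $L'$, cap $\delta_i$) and checks whether the degree sum is already below target; if so it takes $r'=r$ and $R''=R'$, so property~4$'$ is free. If not, it splits $R'$ into \emph{good} and \emph{bad} using Alon's criterion (a node is good if at least a third of its neighbors have no larger degree). When half the edges touch bad nodes, a degree-class argument shows $\Omega(m/\log\Delta)$ expected productive connections directly. When half the edges touch good nodes, each good $v$ is selected with \emph{constant} probability (at least $1-e^{-1/3}$), so in expectation a constant fraction of those edges disappear; Markov then yields the degree-sum drop with constant probability, and $R''$ is defined as $R'$ minus the selected nodes---which are automatically out of $R_{\min}(r+1)$---so property~4$'$ at $r'=r+1$ holds by construction. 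The constant (rather than $1/\log\Delta$) coverage probability for good nodes is exactly what makes the edge-removal step give a genuine constant-factor decrease, and it is what your saturated/sparse split does not supply.
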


\begin{proof}

Our proof, like the proof of Theorem 7.4 in \cite{ghaffari:2016}, is broken up into several steps. For the matching $M$ of size at least $m\cdot c$ over our original graph $G(L,R)$, we denote a node $v\in R$ as the \emph{original match} of a node $u\in L$ if $\{u,v\}\in M$. This terminology is also taken from the original proof.

\paragraph{Remove High Degree Nodes from \texorpdfstring{$L$}{L}.} Let $\delta_i=(1/c)\cdot\log{\Delta}\cdot\Delta^{1-\frac{i-1}{32\cdot\log{\Delta}}}$ and consider all nodes in $L$ with degree at most $\delta_i$. As in \cite{ghaffari:2016} this choice of $\delta_i$ is based on our assumptions that $|L|\geq c\cdot m$ and $\sum_{u\in L}deg_{L,R}(u)\leq m\Delta^{1-\frac{i-1}{32\cdot\log{\Delta}}}$ such that at most a $1/\log{\Delta}$ fraction of the nodes $u\in L$ can have $deg_{L,R}(u)>\delta_i$. Let $L'\subseteq L$ be the subset of nodes once we remove all such high degree nodes from $L$ and again note that $|L'|\geq (1-1/\log{\Delta})\cdot|L|$.

\par

We then remove all nodes from $R$ that are not connected to $L'$ and denote the remaining set $R'$. Note that for every node $u\in L'$, every neighbor $N_{L,R}(u)=N_{L',R'}(u)$. The authors of \cite{ghaffari:2016} note that this implies $G(L',R')$ has a matching of size $|L'|$ since for every node $u\in L'$, $u$'s original match is in $R'$. These observations alone fulfill conditions $a$, $b$ and $d$ of the second objective of the lemma. Therefore if condition $c$ holds such that $\sum_{u\in L'}deg_{L',R'}(u)\leq m\Delta^{1-\frac{i}{32\cdot\log{\Delta}}}$, the second objective of the lemma is already satisfied by setting $L''=L'$, $R''=R'$, and $r'=r$. We therefore assume for the remainder of the proof that $\sum_{u\in L'}deg_{L',R'}(u)\geq m\Delta^{1-\frac{i}{32\cdot\log{\Delta}}}$.

\par

At this point we diverge significantly from the strategy of the original proof and introduce a new technique for leveraging this assumption regarding the the degree sum in $G(L',R')$. We start by leveraging a definition which was first used in \cite{alon:1986} in the context of the maximal independent set problem. Namely, call a node $u$ \emph{good}  with respect to a graph $G$ if $|\{v|v\in N(u),deg(v)\leq deg(u)\}|\geq deg(u)/3$ where $N(u)$ and $deg(u)$ are $u$'s neighbor set and degree in $G$. Otherwise call $u$ \emph{bad} with respect to $G$. In other words, a node is good with respect to a graph $G$ if at least one third of its neighbors in $G$ have at most its degree in $G$.

\par

In $G(L',R')$ let $R'_b\subseteq R'$ be the bad nodes in $R'$ and let $R'_g\subseteq R'$ be the good nodes, where good and bad are defined with respect to $G(L',R')$. Since every edge $G(L',R')$ has an endpoint in $R'$, clearly either $\sum_{u\in R'_b}deg_{L',R'}(u)\geq (1/2)\cdot m\Delta^{1-\frac{i}{32\cdot\log{\Delta}}}$ or $\sum_{u\in R'_g}deg_{L',R'}(u)\geq (1/2)\cdot m\Delta^{1-\frac{i}{32\cdot\log{\Delta}}}$. Simply speaking since $R'_b\cup R'_g=R'$, at least half of the edges in $G(L',R')$ are incident on $R'_b$ or at least half are incident on $R'_g$. We first assume the former case.

\paragraph{Case \#1: At Least Half the Edges in \texorpdfstring{$G(L',R')$}{G(L',R')} are Incident on \texorpdfstring{$R'_b$}{R'b}.}  Let $G(L'_b, R'_b)$ be the graph induced by the edges incident on $R'_b$ and note that for every $v\in R'_b$, $N_{L',R'}(v)=N_{L'_b,R'_b}(v)$. Next, recognize that if for any $v\in R'_b$, $deg_{L'_b,R'_b}(v)>\delta_i$, $v$ would have higher degree in $G(L',R')$ than any node in $L'$ (since every node in $L'$ has degree at most $\delta_i$), making $v$ trivially good with respect to $G(L',R')$. This contradicts $v\in R'_b$, therefore $deg_{L'_b,R'_b}(v)\leq \delta_i$.

\par

Divide the nodes of $R'_b$ into $\lceil\log{\delta_i}\rceil$ classes based on their degree in $G(L'_b,R'_b)$ such that nodes of degree $[2^{j-1},2^{j}]$ are in the $j$th class, denoted $R'_b(j)$. Let $E_j$ be the edges incident on nodes of this class. Note that by our case assumption, $\sum_{j\in [\log{\delta_i}]}|E_j|\geq (1/2)\cdot m\Delta^{1-\frac{i}{32\cdot\log{\Delta}}}$. Since for every node in $u\in L'_b$, $deg_{L'_b,R'_b}(u)\leq deg_{L',R'}(u)\leq \delta_i$, the probability that $v\in R'_b(j)$ participates in a productive connection in round $r$ is at least

\begin{align}
    &1-\prod_{u\in N_{L'_b,R'_b}(v)}\Bigg(1-\frac{1}{deg_{L'_b,R'_b}(u)}\Bigg)\\
    &\geq 1-\prod_{u\in N_{L'_b,R'_b}(v)}\Bigg(1-\frac{1}{deg_{L',R'}(u)}\Bigg)\\
    &\geq 1-\prod_{u\in N_{L'_b,R'_b}(v)}\Bigg(1-\frac{1}{\delta_i}\Bigg)
    = 1-(1-1/\delta_i)^{deg_{L'_b,R'_b}(v)}\\
    &\geq 1-(1-1/\delta_i)^{2^{j-1}}
    \geq 1-\frac{1}{1+2^{j-1}/\delta_i}\label{ineq1}\\
    &= \frac{2^{j-1}/\delta_i}{1+2^{j-1}/\delta_i}\geq \frac{2^{j-1}}{2\delta_{i}}\label{ineq2}
\end{align}

Note that for Line \ref{ineq1} we use the inequality $(1+x)^n\leq \frac{1}{1-xn}$ for $x\in[-1,0],n\in \mathbb{N}$ (which can be shown via Bernoulli's inequality) and for Line \ref{ineq2} we use our observation that $2^{j-1}\leq \delta_i$ and therefore $2^{j-1}/ \delta_i\leq 1$. Now, since there are $|E_j|$ edges incident on nodes in $R'_b(j)$ and each node in $R'_b(j)$ has degree at most $2^{j}$, there are at least $\frac{|E_j|}{2^{j}}$ nodes in this class. Therefore, the expected number of nodes from $R'_b(j)$  which participate in a productive connection $r$ is at least
$\frac{|E_j|}{2^j}\cdot \frac{2^{j-1}}{2\delta_i}
=\frac{|E_j|}{4\delta_i}$.

Therefore, the expected number of nodes selected across all $\log{\delta_i}$ classes is
\setcounter{equation}{0}
\begin{align}\label{line:beforesub}
&\sum_{j\in[\log{\delta_i}]}\frac{|E_j|}{4\delta_i}=\frac{1}{4\delta_i}\sum_{j\in[\log{\delta_i}]}|E_j|\\
&\geq\frac{(1/2)\cdot m\Delta^{1-\frac{i}{32\cdot\log{\Delta}}}}{4\delta_i}
=\frac{1}{4\delta_i}\sum_{j\in[\log{\delta_i}]}|E_j|\\\label{line:aftersub}
&\geq\frac{(1/2)\cdot m\Delta^{1-\frac{i}{32\cdot\log{\Delta}}}}{4\delta_i}
=\frac{ m\Delta^{1-\frac{i}{32\cdot\log{\Delta}}}}{8\delta_i}\\
&=\frac{m \cdot\Delta^{1-\frac{i}{32\cdot \log{\Delta}}}}{8\cdot (1/c)\cdot\log{\Delta}\cdot \Delta^{1-\frac{i-1}{32\cdot\log{\Delta}}}}=\frac{m\cdot c\cdot
\Delta^{-\frac{1}{32\cdot\log{\Delta}}}}{8\cdot \log{\Delta}}
=\Omega\bigg(\frac{m}{\log{\Delta}}\bigg)
\end{align}

Please note that here we derive Line \ref{line:aftersub} from Line \ref{line:beforesub} by leveraging our case assumption. Since this expectation is equal to the sum of negatively-correlated random variables, as in \cite{ghaffari:2016} we can then apply the Chernoff bound from Theorem \ref{thm:chernoff} to achieve a concentration around this bound to show that the probability the actual number of productive connections (for example) is at most a $1/8$ fraction of this expectation is at most 0.69. Note that this is only true if $m/\log{\Delta}\geq 1$ but if $m/\log{\Delta}< 1$ then the lemma is satisfied by informing just a single node in $R'$ which happens trivially. Therefore, with at least constant probability the first objective of the lemma is satisfied.

\paragraph{Case \#2: At Least Half the Edges in \texorpdfstring{$G(L',R')$}{G(L',R')} are Incident on \texorpdfstring{$R'_g$}{R'g}.} Now assume $\sum_{u\in R'_g}deg_{L',R'}(u)\geq (1/2)\cdot m\Delta^{1-\frac{i}{32\cdot\log{\Delta}}}$. As before, let $G(L'_g, R'_g)$ denote the graph induced by the edges incident on $R'_g$ and note that for all $u\in L'_g$, $deg_{L'_g,R'_g}(u)\leq deg_{L',R'}(u)$ and for all $v\in R'_g$, $N_{L'_g,R'_g}(v)= N_{L',R'}(v)$. For $v\in R'_g$, let the notation $N^\ell_{L'_g,R'_g}(v)=\{u\mid u\in N_{L'_g,R'_g}(v), deg_{L'_g,R'_g}(u)\leq deg_{L'_g,R'_g}(v)\}$ denote $v$'s lower degree neighbors in $G(L'_g,R'_g)$. Define $N_{L',R'}^\ell(v)$ for $v\in R'$ similarly. Therefore, since for all $u\in L'_g$, $deg_{L'_g,R'_g}(u)\leq deg_{L',R'}(u)$ and for all $v\in R'_g$, $N_{L'_g,R'_g}(v)= N_{L',R'}(v)$, we have that for all $v\in R'_g$, $N^\ell_{L',R'}(v)\subseteq N^\ell_{L'_g,R'_g}(v)$. Therefore, the probability a node $v\in R'_g$ is selected is at least,

\setcounter{equation}{0}
\begin{align}
    &1-\prod_{u\in N_{L'_g,R'_g}(v)}\Bigg(1-\frac{1}{deg_{L',R'}(u)}\Bigg)\\
    &\geq 1-\prod_{u\in N^\ell_{L'_g,R'_g}(v)}\Bigg(1-\frac{1}{deg_{L',R'}(u)}\Bigg)\\
    &\geq 1-\prod_{u\in N^\ell_{L',R'}(v)}\Bigg(1-\frac{1}{deg_{L',R'}(u)}\Bigg)\\
    \label{line:lightneighbors}
    &\geq 1-\prod_{u\in N^\ell_{L',R'}(v)}\Bigg(1-\frac{1}{deg_{L',R'}(v)}\Bigg)\\
    &\geq 1-(1-1/deg_{L',R'}(v))^{|N^\ell_{L',R'}(v)|}\\
    \label{line:gooddef}
    &\geq 1-(1-1/deg_{L',R'}(v))^{deg_{L',R'}(v)/3}\geq 1-e^{-1/3}>1/4
\end{align}

\par

Note that for Line \ref{line:lightneighbors} we use our definition of $N^\ell_{L',R'}(v)$ to replace the degree of $u$ with that of $v$ and Line \ref{line:gooddef} is where we leverage the assumption that $v$ is good. Now remove every node from $R'$ that is selected in round $r$ and denote the remaining set $R''$, and remove from $L'$ every node $u$ for which $u$'s original match was removed from $R'$. Denote the remaining nodes $L''$. Since we know from the above that each node $v\in R'_g$ is removed from $R'$ with probability at least $1/4$ and the probability that any edge $\{u,v\}$ is removed from $G(L',R')$ is at least the probability that $v$ is removed from $R'$, the probability that an edge $\{u,v\}$ incident on $R'_g$ is removed is at least $1/4$. Since by our case assumption that there are at least $(1/2)\cdot m\Delta^{1-\frac{i}{32\cdot\log{\Delta}}}$ edges incident on nodes in $R'_g$, in expectation, at least $(1/8)\cdot m\Delta^{1-\frac{i}{32\cdot\log{\Delta}}}$ edges are removed from $G(L',R')$. Therefore, since by our initial assumption that $\sum_{u\in L'}deg_{L,R}(u)\leq m\Delta^{1-\frac{i-1}{32\cdot\log{\Delta}}}$ and the fact that $\sum_{u\in L'}deg_{L',R'}(u)\leq\sum_{u\in L}deg_{L,R}(u)$,  the expected number of edges $X$ remaining in $G(L'',R'')$ is at most
 $\mathbf{E}[X]\leq m\Delta^{1-\frac{i-1}{32\cdot\log{\Delta}}}-(1/8)\cdot m\Delta^{1-\frac{i}{32\cdot\log{\Delta}}}
$. Therefore, by applying Markov's inequality we can upper bound the probability that $X\geq m\Delta^{1-\frac{i}{32\cdot\log{\Delta}}}$:
\begin{align*}
    \mathbf{Pr}[X\geq  m\Delta^{1-\frac{i}{32\cdot\log{\Delta}}}]&\leq \frac{m\Delta^{1-\frac{i-1}{32\cdot\log{\Delta}}}-(1/8)\cdot m\Delta^{1-\frac{i}{32\cdot\log{\Delta}}}}{m\Delta^{1-\frac{i}{32\cdot\log{\Delta}}}}
    &= \Delta^{\frac{1}{32\cdot\log{\Delta}}}-1/8
    = 2^{1/32} - 1/8
    <15/16
\end{align*}
Therefore, we've shown that $\sum_{u\in L''}deg_{L'',R''}(u)\leq m\Delta^{1-\frac{i}{32\cdot\log{\Delta}}}$ with at least constant probability. Since this satisfies condition $c$ of the second objective of the lemma, we conclude by showing that either the remaining conditions of this objective are satisfied or the first objective has been achieved.
\par

If $|L''|< (1-1/\log{\Delta})^2\cdot |L|$ then note that this means $|L''|< (1-1/\log{\Delta})\cdot |L'|$ since $|L'|\geq (1-1/\log{\Delta})\cdot |L|$. As is noted in \cite{ghaffari:2016}, this implies that at least a $1/\log{\Delta}$ fraction of nodes in $L'$ had their original match removed in round $r$ which means that at least $|L'|/\log{\Delta}$ nodes of $R'$ were selected and therefore participated in a productive connection. Since $|L'| \geq (1-1/\log{\Delta})\cdot |L|=\Omega(m)$ this would indicate that at least $\Omega(m/\log{\Delta})$ nodes of $R'$ participated in a productive connection which would satisfy the first objective of the lemma.

\par

Therefore, assume $|L''|\geq (1-1/\log{\Delta})^2\cdot |L|$. Note that once again by our construction, for every node $u\in L''$, every neighbor of $u$ in $R_{min}(r+1)$ is in $R''$. This includes $u$'s original match in $R'$ such that there is a matching over $G(L'',R'')$ of size $|L''|$. Furthermore by our construction of $G(L'',R'')$, $L''\subseteq L\cap L_{min}(r+1)$ and $R''\subseteq R\cap R_{min}(r+1)$. Therefore, with at least constant probability, the second objective  is satisfied for $L''$, $R''$, and $r'=r+1$.
\end{proof}

We now apply Lemma \ref{lem:prodconns} inductively over $\bigO(\log{\Delta})$ rounds and leverage our result from Lemma \ref{lem:minmatching} to bound the number of connections over this period. This proof can be found in Appendix \ref{proof:fullprogress}.

\begin{lemma}
\label{lem:fullprogress}

Fix a round $r>0$.  With at least constant probability, within at most $O(\log{\Delta})$ rounds at least $\Omega((\alpha/\log{\Delta})\cdot n^*_{min}(r))$ nodes of $S_{min}(r)$ participate in a productive connection.

\end{lemma}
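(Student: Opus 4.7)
The plan is to apply Lemma~\ref{lem:prodconns} inductively, seeded by the matching from Lemma~\ref{lem:minmatching}, and to iterate for $O(\log{\Delta})$ rounds until case~1 of its conclusion fires. I would begin by invoking Lemma~\ref{lem:minmatching} to pin down a matching $M$ over $G_{min}(r)$ of size $m\geq (\alpha/4)\cdot n^*_{min}(r)$. I would then initialize $L_0$ as the $L_{min}(r)$-endpoints of $M$, and $R_0$ as the union, over $u\in L_0$, of the $R_{min}(r)$-neighbors of $u$. This seed satisfies the four preconditions of Lemma~\ref{lem:prodconns} at index $i=1$ with $c=1$: the restriction of $M$ is a matching of size $|L_0|=m$; $|R_0|\geq |L_0|$; the trivial bound $\sum_{u\in L_0}deg_{L_0,R_0}(u)\leq m\Delta$ exactly matches the degree-sum condition at $i=1$; and every $R_{min}(r)$-neighbor of any $u\in L_0$ lies in $R_0$ by construction.

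From this seed I would repeatedly invoke Lemma~\ref{lem:prodconns} in consecutive rounds. In each invocation either case~1 fires and delivers $\Omega(m/\log{\Delta})=\Omega((\alpha/\log{\Delta})\cdot n^*_{min}(r))$ productive connections in a single round, immediately proving the lemma, or case~2 fires and yields smaller subsets $L_{j+1}\subseteq L_j$, $R_{j+1}\subseteq R_j$ that satisfy the preconditions for the next invocation at index $i+1$ in round $r'\in\{r,r+1\}$. Each case~2 step only shrinks $|L_j|$ by a factor of $(1-1/\log{\Delta})^2$, so the compound shrinkage over $32\log{\Delta}$ iterations remains a universal constant; in particular the parameter $c_j$ fed into subsequent applications of the lemma stays bounded away from zero and $|L_j|=\Omega(m)$ throughout. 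Moreover, because the lemma is only stated for $i\in[32\log{\Delta}]$, case~2 can fire at most $32\log{\Delta}-1$ times before the precondition $\sum deg\leq m\Delta^{1-(i-1)/(32\log{\Delta})}$ collapses to a bound of at most $m$, at which point any further invocation must resolve to case~1.

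The main obstacle I anticipate is probability amplification: a single application of Lemma~\ref{lem:prodconns} only promises one of its two outcomes with constant probability, while my iteration chains $\Theta(\log{\Delta})$ of them. My plan is to run the procedure for $T=\Theta(\log{\Delta})$ rounds with a sufficiently large hidden constant, and to count the ``eventful'' rounds in which one of the two outcomes of the lemma is actually realized. Since eventfulness holds each round with some constant probability $p$ regardless of the state the iteration has reached, Lemma~\ref{lem:stochasticdominance} implies that with at least constant probability strictly more than $32\log{\Delta}$ rounds are eventful; by the cap on case~2 firings noted above, at least one eventful round must then realize case~1 and contribute the desired $\Omega((\alpha/\log{\Delta})\cdot n^*_{min}(r))$ productive connections incident on nodes of $R_0\subseteq S_{min}(r)$. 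A small subtlety I would address separately is the possibility that $i_{min}$ strictly increases during these rounds, but in that event the conclusion of the lemma is already trivially surpassed, so nothing extra is needed.
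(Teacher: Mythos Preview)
Your proposal is correct and follows essentially the same approach as the paper: seed with the matching from Lemma~\ref{lem:minmatching}, iterate Lemma~\ref{lem:prodconns} for $O(\log\Delta)$ rounds while tracking the shrinkage of $|L_j|$ to keep $c_j$ bounded (the paper sets $c=e^{-64}$), terminate once the degree-sum condition collapses, and amplify the per-round constant probability via Lemma~\ref{lem:stochasticdominance}. Your explicit initialization of $L_0$ as the matching endpoints and your remark on the case where $i_{min}$ increases mid-phase are nice touches that the paper leaves implicit, but the overall argument is the same.
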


Now that we have bounded the expected number of successful connections over a phase of $\bigO(\log{\Delta})$ rounds, our goal will be to bound the number of rounds required to increase the minimum token set size in the entire network. To establish this lemma (for which the proof can be found in Appendix \ref{proof:finishsizeincrease}) we first leverage Lemma \ref{lem:sizeincrease} which only bounds the time required for at least half the nodes to have more than the minimum token set size.

\begin{lemma}
\label{lem:finishsizeincrease}

Fix a round $r>0$. There exists a round $r_t$ where $r_t= r+\bigO((1/\alpha)\log{n}\log^2{\Delta})$ such that w.h.p. in $n$, all nodes in $S_{min}(r)$ participate in a productive connection by round $r_t$.
\end{lemma}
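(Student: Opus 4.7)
The plan is to iterate Lemma \ref{lem:sizeincrease} roughly $O(\log n)$ times to drive the number of still-unprocessed nodes of $S_{min}(r)$ from at most $n$ down to zero with high probability. I organize the rounds following $r$ into phases of length $T_0 = O((1/\alpha)\log^2\Delta)$, matching the bound provided by Lemma \ref{lem:sizeincrease}. Call a phase \emph{successful} if, during it, at least half of the still-unprocessed nodes of $S_{min}(r)$ participate in a productive connection (and therefore move past $i_{min}(r)$); by Lemma \ref{lem:sizeincrease}, each phase is successful with at least some constant probability $p > 0$. Since the unprocessed subset of $S_{min}(r)$ is monotonically non-increasing over time, every successful phase permanently halves its size.

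Because $|S_{min}(r)| \le n$, having $\lceil \log_2 n \rceil$ total successful phases is enough to empty the unprocessed set. To boost the per-phase constant success probability into a high-probability statement, I would run $c \log n$ phases for a sufficiently large constant $c$ and invoke Lemma \ref{lem:stochasticdominance}: the count of successful phases stochastically dominates a sum of $c\log n$ independent indicators with success probability $p$, yielding $\Omega(\log n) \ge \log_2 n$ successes with probability at least $1 - \exp(-\Omega(\log n)) = 1 - 1/\mathrm{poly}(n)$. The total elapsed time is $c \log n \cdot T_0 = O((1/\alpha)\log n \log^2\Delta)$ rounds, exactly as claimed.

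The main obstacle I anticipate is justifying that the constant-probability guarantee of Lemma \ref{lem:sizeincrease} holds uniformly across phases regardless of history, which is precisely what Lemma \ref{lem:stochasticdominance} requires. This should follow because Lemma \ref{lem:sizeincrease} (through Lemma \ref{lem:fullprogress}) only reads off the current minimum productive subgraph; conditioning on any earlier execution trajectory leaves its lower bound on per-phase progress intact. A minor secondary issue is the tail of the iteration, when only a handful of unprocessed $S_{min}(r)$-nodes remain and ``halving'' becomes a degenerate notion. This is easily absorbed: whenever the unprocessed set is nonempty, Lemma \ref{lem:fullprogress} still guarantees at least one productive connection involving such a node with constant probability per $O(\log\Delta)$-round window, so the same stochastic-dominance argument drains the last surviving nodes inside the same $O(\log n)$-phase envelope.
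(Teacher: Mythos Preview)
Your plan rests on a misreading of Lemma~\ref{lem:sizeincrease}. That lemma does not give a constant-probability halving of the unprocessed set in $O((1/\alpha)\log^2\Delta)$ rounds; its round bound is $O((1/\alpha)\log n\log^2\Delta)$, its conclusion is a w.h.p.\ statement, and its content is that the \emph{complement} $V\setminus S_{min}$ grows to a majority (under the precondition $n-n_{min}(r)\le n/2$). Iterating it $O(\log n)$ times would both overshoot the target round budget by a $\log n$ factor and violate its precondition after the first application.

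The deeper issue is that the per-window progress you can extract from Lemma~\ref{lem:fullprogress} is $\Omega((\alpha/\log\Delta)\cdot n^*_{min})$ connections, with $n^*_{min}=\min(n_{min},\,n-n_{min})$. While $S_{min}$ is still the majority, $n^*_{min}=n-n_{min}$, which can be arbitrarily small compared to $|S_{min}|$; you therefore cannot halve the unprocessed set in a phase of length $O((1/\alpha)\log^2\Delta)$ during this regime. The paper handles this asymmetry by splitting into two stages: it first invokes Lemma~\ref{lem:sizeincrease} as a black box to reach $n_{min}\le n/2$ (by geometrically \emph{growing} $n-n_{min}$), and only then runs a symmetric argument that geometrically \emph{shrinks} $n_{min}$, since in that second regime $n^*_{min}=n_{min}$ and each sequence of $O((\log\Delta)/\alpha)$ successful $O(\log\Delta)$-round phases removes a constant fraction of the remaining unprocessed nodes. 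Your halving idea is essentially correct for this second stage, but you need the first stage to reach it, and you should be citing Lemma~\ref{lem:fullprogress} (not Lemma~\ref{lem:sizeincrease}) for the per-phase guarantee there.
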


\begin{proof}[Proof (of Theorem \ref{thm:syncalg})]

We now have everything we need to prove our main theorem. Consider any round $r$ with minimum token set size $i_{min}(r)$. From Lemma \ref{lem:samesize} we have that if $C(r)=1$ then in round $r+1$ either $C(r+1)>1$ or $i_{min}(r+1)>i_{min}(r)$. As we show in Lemmas \ref{lem:minmatching} through \ref{lem:finishsizeincrease}, if the former is true for each round $r'$ we consider where $r'>r$ and $C(r')>1$, at most $((1/\alpha)\log{n}\log^2{\Delta})$ total rounds are needed for every node in $S_{min}(r)$ to participate in a productive connection. Furthermore, if instead $C(r')=1$ for any such round then clearly it is still the case that every node of $S_{min}(r)$ has participated in a productive connection in this many rounds.

\par

Therefore this necessitates that the minimum token set size after $((1/\alpha)\log{n}\log^2{\Delta})$ is at least $i_{min}(r)+1$. Since clearly, the minimum token set size can increase at most  $k$ times, the total round complexity of the algorithm to spread all $k$ tokens is at most O$((k/\alpha)\log{n}\log^2{\Delta})$ total rounds.
\end{proof}

\subsection{Lower Bound for Gossip in the Mobile Telephone Model}

We now show that our algorithm is optimal to within polylogarithmic factors by proving a lower bound for gossip in our model. 
The proof can be found in Appendix \ref{proof:gossiplowerbound}.

\begin{theorem}
\label{thm:gossiplowerbound}

For $k$ initial tokens and any value $1/n\leq \alpha\leq 1/2$, there is a graph with $n$ nodes and vertex expansion at least $\alpha$ where $\Omega(k/\alpha)$ rounds are required to solve the gossip problem.
\end{theorem}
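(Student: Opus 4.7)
The plan is to use a complete bipartite graph as the hard instance. Given $k$ and $\alpha \in [1/n, 1/2]$, I would take $h = \lceil \alpha n \rceil$ and work with $G = K_{h, n-h}$, calling the side of size $h$ the \emph{hub} $H$ and the side of size $n - h$ the \emph{leaves} $L$. I would choose $n$ large enough (the statement permits $n$ to be picked freely subject to $n \geq 1/\alpha$) that $k \leq n - h$, so that the $k$ tokens can initially be placed on $k$ distinct leaves, one token per leaf.

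First I would verify that $G$ has vertex expansion at least $\alpha$ by splitting on the location of a set $S$ with $|S| \leq n/2$. If $S \subseteq L$, then every node in $H$ is a boundary node of $S$, so $|\partial S|/|S| \geq h/(n-h) \geq \alpha/(1-\alpha) \geq \alpha$. If $S \subseteq H$, then every node in $L$ is a boundary node, giving $|\partial S|/|S| \geq (n-h)/h \geq 1 \geq \alpha$ (using $\alpha \leq 1/2$). For mixed $S = S_H \cup S_L$ with both parts non-empty, the boundary is exactly $V \setminus S$, so the ratio is at least $(n-|S|)/|S| \geq 1 \geq \alpha$.

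The heart of the argument is a token-accounting lower bound. Because $G$ is bipartite, every edge (and therefore every connection formed in any round by any algorithm) has one endpoint in $H$ and one in $L$. Each hub node participates in at most two connections per round, so at most $2h$ connections occur in a round, and each transfers only $O(1)$ tokens. Hence at most $O(h)$ new $(\text{leaf},\text{token})$ knowledge pairs are created per round. On the other hand, for gossip to complete, each of the $n - h$ leaves must learn each of the $k$ tokens, which requires $(n-h)k$ leaf-knowledge pairs; only $k$ of these exist at the start. Therefore at least $(n-h)k - k = \Omega(nk)$ new pairs must be created, forcing at least $\Omega(nk/h) = \Omega(k/\alpha)$ rounds of the algorithm.

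The step that needs the most care is this final counting: one must use the hub capacity bound $2h$ rather than the naive leaf capacity bound $2(n-h)$, since although the leaf side has many slots, each connection consumes one slot on the hub side, and the small hub is the scarce resource choking throughput to the many leaves. Once that observation is made, the rest reduces to the elementary arithmetic above; no probabilistic reasoning or adversarial scheduling is needed, and the construction recovers the standard star-graph lower bound at the extreme $\alpha = 1/n$.
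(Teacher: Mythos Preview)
Your argument is correct and is essentially the paper's proof: both build a graph whose only bottleneck is a hub of size $\Theta(\alpha n)$, verify the vertex expansion by the same case split, and then count token deliveries through that hub to get $\Omega(nk/h)=\Omega(k/\alpha)$. The paper's construction differs only cosmetically---it makes the $q=n\alpha$ hub nodes a clique (your $K_{h,n-h}$ is that graph with the intra-hub edges deleted) and places all $k$ tokens at the hub rather than on leaves---but the throughput-counting step is identical.

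One small point: the theorem statement treats $n$ as given (with $\alpha\in[1/n,1/2]$), not as a parameter you may enlarge, so your maneuver of choosing $n$ large enough that $k\le n-h$ is not actually licensed. Fortunately it is also unnecessary: nothing in your counting uses that the tokens sit on distinct leaves, so you can simply drop that requirement (e.g., put all $k$ tokens on a single leaf) and the bound $(n-h)k-k=\Omega(nk)$ new leaf--token pairs still holds. With that tweak your proof goes through for arbitrary $k$ and fixed $n$, matching the paper.
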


\section{Asynchronous Gossip}
\label{sec:asynch}

Here we analyze an asynchronous version of our random diffusion gossip strategy
in the aMTM.

\subsection{The Asynchronous Mobile Telephone Model}

The asynchronous mobile telephone model (aMTM), first introduced in \cite{newport2019random}, 
removes the assumption of synchronous rounds from the MTM. 
Core communication properties, such as the time required for a neighbor to receive an advertisement or connection proposal,
or the speed at which information is transmitted over a connection, can now vary arbitrarily
during an execution.

Similar to the MTM, the topology of the underlying network is defined by an undirected graph.
Furthermore, the behavior of the nodes in the aMTM is similarly constrained by a fixed {\em scan-and-connect} behavioral loop in which nodes: 
update their own advertisement, wait to hear new advertisements from at least some neighbors, 
decide whether to act on these advertisements by
attempting to form a connection with a neighbor, then repeat. Unlike the MTM, however,  nodes do not progress through this loop in a synchronized manner, with delays decided by an adversarial scheduler. This loop is formalized in Algorithm \ref{alg:amtminterface}.
The model implements the methods ${\tt update}$, ${\tt receiveAds}$, and ${\tt blockForConn}$,
which abstract the details of the underlying asynchronous communication.

\begin{algorithm}
\SetAlgoNoLine
\caption{The aMTM interface (for device $u$)}
\label{alg:amtminterface}

$state \gets {\tt idle}$\;
\textsc{Initialize()}\;
\While{${\tt true}$}
{
$tag \gets$ \textsc{GetTag}()\;
${\tt update}(tag)$\;
$receiver\gets {\tt null}$\;
$A\gets {\tt receiveAds}()$\;
\lIf{$A \neq \emptyset$}
{
$receiver \gets$ \textsc{Select}($A$)
}
\label{alg:line:blockforadds}
\lIf{$receiver\neq {\tt null}$}
{
$state \gets {\tt blockForConn}(receiver)$
}
\If{$state = {\tt connected}$}
{
\textsc{Communicate}($receiver$)\;
$state \gets {\tt idle}$
}
}
\end{algorithm}

\paragraph{Model Guarantees and Parameters.}
As is standard with asynchronous network models,
we constrain the model's behavior with respect to a set of maximum delays and bit rates
specified for its key communication activities.
We define these delays for a given execution
with the parameters $\delta_{update}$, $\delta_{conn}$, and $R_b$.
The values of these parameters are not known to the algorithm and can change from execution to execution. 
We detail the guarantees they help specify below:

\smallskip

{\em Advertisement Guarantees:} 
If a node $u$ calls {\tt update} at some time $t$,
then the model guarantees that every neighbor of $u$ must receive
an advertisement from $u$ in the interval $t$ to $t + \delta_{update}$,
and that only advertisements $u$ passed to {\tt update} during this interval are received in this interval.
Notice, there is no guarantee that $u$'s neighbors receive {\em all} of its advertisements.
It is possible, for example, that $u$ advertises $a$ at a given time, then loops back 
around
in less than $t_{update}$ time and replaces this with a new advertisement $a'$ 
before any neighbor had a chance to receive $a$.
On the other hand, once a node begins advertising, its neighbors will hear from
it at least once every $\delta_{update}$ time.


\smallskip

{\em Connection Attempt Guarantees:}
The parameter $\delta_{conn}$ bounds the maximum time required
for the {\tt blockedForConn} model method to resolve a connection attempt and
return whether or not the attempt succeeded.
In more detail, when $u$ calls {\tt blockedForConn}($v$),
for some neighbor $v$,
the model guarantees to deliver a connection proposal
to $v$. If $v$ is already engaged in a connection (i.e., it has previously accepted
a proposal and the resulting connection is still open), it will reject $u$'s proposal.
Otherwise, it will accept the proposal. The model must deliver the proposal
and the response within this interval of length $\delta_{conn}$.
The loop blocks until this underlying communication completes and {\tt blockForConn}
can return the status of the connection.
Notice that as in the synchronous model,
these guarantees prevent any node from servicing more than one incoming connection at a time.

\smallskip

{\em Communication Guarantees:}
Assume $v$ accepts $u$'s connection proposal.
At this point, they are connected and can communicate as specified by their respective
\textsc{Communicate} methods.
For many algorithms, such as the gossip strategy studied in this paper,
we simply specify what occurs during this connection in the sender's \textsc{Communicate}
routine. When implementing  algorithms, however,
this behavior must be explicitly specified for both
the sender and receiver roles.
The amount of time required by these interactions depends on both the amount
of information transmitted by \textsc{Communicate} and the transmission rates
determined by the model.
We use the parameter $R_b$ to bound the {\em minimum} bit rate at which
the model can transmit information between a connected pair of neighbors.
We assume that when a call to \textsc{Communicate} returns,
the connection is closed. It follows that each node can participate in at most one outgoing connection
at a time.

\smallskip

\paragraph{Implementation.}
The authors in~\cite{newport2019random} provide an implementation of the aMTM interface
in iOS. This implementation works with the peer-to-peer networking libraries
included in iOS to execute the main aMTM loop.
The algorithm designer working with this interface need only
implement the \textsc{Initialize}, \textsc{Update},
\textsc{Select}, and \textsc{Communicate} functions.
This close connection between the abstract aMTM model and real world implementation
simplifies the task of deploying on iPhones any peer-to-peer algorithm described in the aMTM. To underscore the directness of this connection,
we provide in Appendix~\ref{apx:implementation}
the straightforward SWIFT code that implements our gossip strategy in iOS.

\subsection{The Asynchronous Random Diffusion Algorithm}

\par We now introduce our asynchronous random diffusion gossip algorithm,
which is formalized in Algorithm~\ref{alg:asyncalg}.
This algorithm adapts the strategy of synchronous random diffusion gossip
to the asynchronous setting. The major difference is that in each loop iteration, a node selects a neighbor for connection from the set of 
advertisements it has received since the last iteration.
In the synchronous setting, by contrast, a node is always considering the
latest advertisements from all of its neighbors.

\begin{algorithm}
\caption{Asynchronous random diffusion gossip (for node $u$)} 
\label{alg:asyncalg}

\SetAlgoNoLine

\SetKwProg{func}{function}{}{}
\SetFuncSty{textsc}

\SetKwFunction{init}{Initialize}
\func{\init{}}
{
$T\gets$ initial tokens (if any) known by $u$\;
$H\gets$ some hash function\;
}
\SetKwFunction{getTag}{GetTag}
\func{\getTag{}}
{
\Return $\langle H(T),|T|,u\rangle$
}
\SetKwFunction{select}{Select}
\func{\select{$A$}}
{
$\hat{s} \gets \min{(\{s\mid \langle h, s, *\rangle \in A, h\neq H(T)\})}$\;\label{line:asyncalg:s}
$\hat{N} \gets \{v\mid\langle h, \hat{s}, v\rangle \in A, h\neq H(T)\}$\;\label{line:asyncalg:n}
\Return node chosen randomly from $\hat{N}$
}
\SetKwFunction{communicate}{Communicate}
\func{\communicate{$v$}}
{
\textit{(send/receive a token in the set difference with $v$)}
}
\end{algorithm}

\subsection{Analysis}

We prove the following bound on the time complexity of asynchronous random diffusion gossip.
\begin{theorem}
\label{thm:asyncalg}

With high probability in $n$, the asynchronous random diffusion gossip algorithm solves the gossip problem in time $\bigO((k/\alpha)\log{n}\log^2{\Delta}\cdot\delta_{max})$, where where $k$ is the number of tokens, $n$ is the network size,
$\alpha$ is the vertex expansion of the network, $\Delta$ is the maximum degree, and $\delta_{max}$ upper bounds the time required for one iteration of the aMTM loop for this algorithm.
\end{theorem}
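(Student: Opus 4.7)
The plan is to mirror the structure of the synchronous analysis but execute it over asynchronous \emph{intervals} of length $\Theta(\delta_{max})$, since $\delta_{max}$ upper bounds the duration of a single iteration of the aMTM loop. Partition time into intervals $I_1, I_2, \ldots$ and, at the start of each $I_r$, define snapshot versions of $T_u(r)$, $S_i(r)$, $S_{min}(r)$, $i_{min}(r)$, $G(r)$, and $G_{min}(r)$ exactly as in Section~\ref{sec:synch}, using the token sets as they stand at that instant. Lemmas~\ref{lem:samesize} and~\ref{lem:minmatching} carry over unchanged because their statements depend only on the current token sets, not on timing; in particular, $G_{min}(r)$ still admits a matching of size $\Omega(\alpha \cdot n^*_{min}(r))$.

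The hard step, and the main obstacle, is the asynchronous analog of the per-round probability bound that drives Lemma~\ref{lem:prodconns}. Fix an edge $\{u,v\} \in E_{min}(r)$. At the moment during $I_r$ when $u$ executes \textsc{Select} on its accumulated advertisement set $A_u$, one of two things happens: either (i) an advertisement from $v$ reflecting its current token count has already arrived at $u$, in which case $v$ lies in $u$'s candidate set $\hat N$ and, since $|\hat N| \le \deg_{L_{min},R_{min}}(u)$, node $u$ proposes to $v$ with probability at least $1/\deg_{L_{min},R_{min}}(u)$; or (ii) before $u$ incorporates $v$'s advertisement, $v$ has already accepted an incoming proposal from some other neighbor and is therefore participating in a connection during $I_r$ regardless. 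Making this dichotomy formal against an adversarial scheduler requires carefully tracking the order in which $u$'s {\tt receiveAds} call completes, $v$'s {\tt update} propagates, and $v$'s {\tt blockForConn} resolves, all within a window of length $\Theta(\delta_{max})$; the payoff is that in both cases the probability that $v$ participates in a productive connection during $I_r$ is at least $\Omega(1/\deg_{L_{min},R_{min}}(u))$ per neighbor $u$, which is precisely the loose lower bound the rebuilt Lemma~\ref{lem:prodconns} is designed to consume.

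Once this probability bound is established, the rest of the argument is essentially a reindexing of the synchronous proof. Because Lemma~\ref{lem:prodconns} was deliberately reworked to operate over a single scan-and-connect event and to use only lower bounds on selection probabilities, it applies directly with the role of ``round'' played by an interval $I_r$. Iterating it $O(\log \Delta)$ times across consecutive intervals yields the asynchronous analog of Lemma~\ref{lem:fullprogress}: in $O(\log \Delta \cdot \delta_{max})$ time, $\Omega((\alpha/\log\Delta)\cdot n^*_{min}(r))$ nodes of $S_{min}(r)$ participate in productive connections with constant probability. Lemma~\ref{lem:stochasticdominance} then handles the correlations between intervals and upgrades this to a high-probability statement, giving the analog of Lemma~\ref{lem:finishsizeincrease}: every $O((1/\alpha) \log n \log^2 \Delta)$ intervals the minimum token set size strictly increases. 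Since $i_{min}$ can take at most $k$ values, the total time to complete gossip is $O((k/\alpha) \log n \log^2 \Delta)$ intervals, i.e., $O((k/\alpha) \log n \log^2 \Delta \cdot \delta_{max})$ time, as claimed.
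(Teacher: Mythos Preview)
Your proposal is correct and follows essentially the same route as the paper: partition time into $\Theta(\delta_{max})$ intervals, establish a per-interval lower bound on the probability that each $v\in R_{min}$ is selected, feed that bound into the reworked Lemma~\ref{lem:prodconns} (which was rebuilt precisely so that a lower bound suffices), and then replay Lemmas~\ref{lem:fullprogress}--\ref{lem:finishsizeincrease} with ``round'' replaced by ``interval.''

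One refinement: your dichotomy is framed slightly off. You split on whether $v$'s advertisement has \emph{arrived} at $u$, and in your case~(i) you assert $|\hat N|\le deg_{L_{min},R_{min}}(u)$ as if it followed from arrival alone. The paper instead chooses the interval long enough that $u$ is \emph{guaranteed} to call \textsc{Select} after receiving a fresh (post-$t_1$) advertisement from $v$; at that call, either $u$ ``properly considers'' $v$ (both $v\in\hat N_u$ and $|\hat N_u|\le deg_{min}(u)$ hold), or one of these conditions fails, and either failure forces $\hat s_u>i_{min}(t_1)$, which in turn forces $v$ to have already gained a token. This is Lemma~\ref{lem:asyncrandom}, and it yields the aggregate bound $1-\prod_{u\in N_{min}(v)}(1-1/deg_{min}(u))$ that Lemma~\ref{lem:prodconns} consumes.
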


The $\delta_{max}$ parameter included in the above theorem was introduced
to simplify notation by eliminating the need to cite
multiple timing parameters in our complexity bound.
Formally, we define:  $\delta_{max}=\delta_{conn}+\delta_{update}+b_{max}/R_b$,
where $b_{max}$ describes
the maximum size (in bits) of a gossip token.

Notice, because our algorithm only transfers a constant number of
tokens in each call to \textsc{Communicate}, 
each such call requires at most $O(b_{max}/R_b)$ time.\footnote{We omit for now the time required 
for two connected nodes to determine {\em which} token to transfer. Our algorithm
simply specifies that they transfer {\em some} token in the set difference of their token sets.
For the sake of completeness, one could add an additional parameter to capture
the maximum bits needed to also decide on this set difference. We omitted this extra
parameter for now as in the application scenarios we envision, the token sizes are often
large enough their transfer swamps the overhead required to identify which token to transfer.
In the event that the token set sizes are allowed to become massive, however, we can leverage the token transfer subroutine from \cite{newport:2017b} to decide this set difference using only $\bigO(\polylog{(k)})$ additional bits.}
The $\delta_{update}$ and $\delta_{conn}$ parameters upper bound the time required
to get through the {\tt update} and {\tt blockedForConn} methods, respectively.
It follows that each iteration of our gossip algorithm's main aMTM loop
requires at most $O(\delta_{max})$ time, making $\delta_{max}$ a useful aggregate parameter
for bounding asynchronous time complexity.

\smallskip

For the analysis that follows, we re-purpose much of our notation and several of our definitions from the synchronous setting. We will accomplish this through a slight abuse of notation in which we take any element parameterized with an integer round $r$ in the previous section and redefine it with respect to a real time $t$. For example, let $T_u(t)$ be the token set of node $u$ at time $t$ in the same way $T_u(r)$ was $u$'s token set at the beginning of round $r$.

\par

Similarly, we can adapt our notions of the productive subgraph $G(t)$, and minimum productive subgraph $G_{min}(t)$, for a time $t$ using the values of $s_u(t)$ and $N_u(t)$. That being said, some additional care is required in dealing with these graphs in the asynchronous model.
In a round-based setting, you can fix the productive subgraph at the beginning of the round and know
that all nodes will make connection decisions based on that exact graph during the round.
In the asynchronous model no such guarantees hold. You might fix a productive subgraph
at some time $t$, for example, but that graph can change before all the nodes get a chance to learn
it and make a connection decision.

\par

To handle this nuance, we introduce our first pieces of notation unique to our asynchronous analysis. 
Fix some time $t$ at which some node $u$ calls \textsc{Select}.
Let $\hat{N}_u(t)$ and $\hat{s}_u(t)$ be the values calculated on Lines \ref{line:asyncalg:s} and \ref{line:asyncalg:n} of Algorithm \ref{alg:asyncalg}, respectively, during this call to \textsc{Select}.
These values are calculated from the advertisement set $A_u(t)$ which is passed to node $u$'s call to the \textsc{Select} function at time $t$. Note that for a particular time $t$ and node $u$, $N_u(t)$ and $\hat{N}_u(t)$, and $s_u(t)$ and $\hat{s}_u(t)$, can vary, as $N_u(t)$ and $s_u(t)$ are based
on the status of the network at exactly time $t$, whereas 
$\hat N_u(t)$ and $\hat s_u(t)$ are based on the advertisement set passed to \textsc{Select}
at time $t$ (which may by that point already be out of date).
Also note that $\hat N_u(t)$ and $\hat s_u(t)$ are undefined for times that do not correspond
to a \textsc{Select} call. 

\par

To help tame this reality that a given node's snapshot of the network can become out of date
before it has a chance to act on it, 
we introduce the following definition concerning snapshots of the changing minimum productive subgraph:

\begin{definition}

Fix a time interval $[t_1, t_2]$ and two nodes $u,v\in V$ such that $\{u,v\}\in E_{min}(t_1)$. We say that $u$ \textbf{properly considers} $v$ with respect to $t_1$ during this interval if there exists a time $t_{consider}$, $t_1\leq t_{consider}\leq t_2$, such that $u$ calls \textsc{Select} at $t_{consider}$,
 $v\in \hat{N}_u(t_{consider})$, and $|\hat{N}_u(t_{consider})|\leq deg_{min}(u)$,
where $deg_{min}(u)$ is the degree of $u$ in $G_{min}(t_1)$.

\end{definition}

\par

Put another way, if $u$ properly considers $v$ with respect to $[t_1, t_2]$,
then $u$ attempts to connect with $v$ in this interval with at least the same probability
as it would in a round of the synchronous algorithm corresponding to minimum productive subgraph $G_{min}(t_1)$.

It would simplify our analysis if for any time $t_1$ we could identify an interval $[t_1, t_2]$ such that $u$ properly considers $v$ for every edge $\{u,v\}\in E_{min}(t_1)$,
as we could then directly apply our analysis from the synchronous case.
We cannot, however, guarantee that such intervals always exist in our asynchronous setting.
Consider an edge $\{u,v\}\in E_{min}(t)$, for some $t$.
It might be the case that before $u$ can receive an advertisement from $v$,
that some other node connects to $v$ and transmits a token that removes $v$ from
the minimum productive subgraph. By the time $u$ subsequently hears from $v$,
it might no longer include it in its set of productive neighbors.

In some sense, however, this is a good case as it only {\em increases} the probability
that $v$ receives a connection attempt.
The following lemma (for which the proof can be found in Appendix \ref{proof:asyncrandom}) formalizes this intuition by proving that for any endpoint $v$ in a snapshot of the minimum
productive subgraph, $v$ will be selected with {\em at least} the probability that it would
if we had run a round of the synchronous algorithm on that snapshot.
This will allow us to subsequently apply Lemma~\ref{lem:prodconns}, 
which we carefully reworked from its original version in~\cite{ghaffari:2016}
so that it now only requires that this lower bound on selection probabilities holds.
(The original version made use of the exact selection probabilities from the synchronous algorithm.)

\begin{lemma}
\label{lem:asyncrandom}

Fix any time $t_1$ and node $v\in R_{min}(t_1)$, and let $N_{min}$ and $deg_{min}$ be the neighbor set and degree functions defined for $G_{min}(t_1)$. There exists a time $t_2$, where $t_2=t_1+\bigO(\delta_{max})$, such that $v$  connects productively in the interval $[t_1,t_2]$ with probability at least $1-\prod_{u\in N_{min}(v)}(1-1/deg_{min}(u))$.

\end{lemma}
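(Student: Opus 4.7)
The plan is to pick $t_2 = t_1 + c\cdot\delta_{max}$ for a sufficiently large constant $c$ so that, by the aMTM timing guarantees, every $u\in N_{min}(v)$ completes at least one full iteration of Algorithm~\ref{alg:amtminterface} in $[t_1,t_2]$, including a \textsc{Select} call whose advertisement set contains a sufficiently recent advertisement from $v$. I will establish the probability bound via the contrapositive: if $v$ fails to connect productively in $[t_1,t_2]$, then at each such \textsc{Select} call $u$ properly considers $v$ (so $v$ is selected with conditional probability at least $1/deg_{min}(u)$) and yet no $u \in N_{min}(v)$ actually selects $v$. Ordering the $u$'s by the time of their relevant \textsc{Select} call and conditioning on the execution up to each such call, the independence of the fresh coin flips used in the different \textsc{Select} invocations then bounds the probability of this ``no selection'' event by $\prod_{u \in N_{min}(v)}(1 - 1/deg_{min}(u))$.

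The core technical step is to justify proper consideration whenever $v$ remains unconnected. Since $v$'s token set grows only via a productive connection, the failure of $v$ to connect productively during $[t_1, t_{consider}]$ forces $|T_v(t)|=i_{min}(t_1)$ throughout the interval. In particular every advertisement $u$ receives from $v$ reports the globally minimum token-set size $i_{min}(t_1)$, and since no node can have fewer tokens at any time, this is the minimum over $A_u(t_{consider})$, giving $v \in \hat{N}_u(t_{consider})$. For the size bound, any $w \in \hat{N}_u(t_{consider})$ must have advertised size $i_{min}(t_1)$ at some earlier time $t_w$; monotonicity of token counts combined with the global minimality of $i_{min}(t_1)$ then forces $|T_w(t_1)|=i_{min}(t_1)$, placing $w$ in $S_{min}(t_1)$. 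Moreover, since $u \in L_{min}(t_1)$ holds strictly more tokens than any element of $S_{min}(t_1)$ at $t_1$ (hence also at $t_{consider}$, by monotonicity of $|T_u|$), $u$'s and $w$'s hashes differ under the collision-free assumption, so $w$ passes the hash filter in \textsc{Select} and lies in $N_{min}(u)$. Thus $\hat{N}_u(t_{consider})\subseteq N_{min}(u)$ and $|\hat{N}_u(t_{consider})|\le deg_{min}(u)$.

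To close the contrapositive, suppose some $u \in N_{min}(v)$ does select $v$ at $t_{consider}(u)$: $u$ calls \textsc{blockForConn}$(v)$ and the proposal is delivered within $\delta_{conn}$, which by the choice of $t_2$ is still within $[t_1,t_2]$. Either $v$ accepts, or $v$ was already engaged with some other proposer $z$. In either case $v$ ends up connected within $[t_1,t_2]$ to a node whose token count at $t_1$ strictly exceeded $i_{min}(t_1)$ (otherwise the hash filter would have kept $v$ out of that proposer's selection set), so the connection has a nonempty set difference and is productive. Hence the event ``$v$ is not productively connected in $[t_1,t_2]$'' is contained in ``no $u \in N_{min}(v)$ selects $v$ at $t_{consider}(u)$,'' and the lemma follows from the probability computation above. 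The principal obstacle will be making the independence step precise despite the adversarial scheduler: showing that when the $u \in N_{min}(v)$ are processed in the order their $t_{consider}(u)$ occur, each $u$'s coin flip is conditionally independent of the prior history, and the proper-consideration hypothesis remains valid at each step until $v$ actually connects.
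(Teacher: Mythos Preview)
Your contrapositive framing is logically equivalent to the paper's proof, which partitions executions into those where every $u\in N_{min}(v)$ properly considers $v$ (handled by the independence computation) and those where some $u$ fails to (where the paper shows $v$ must already have connected productively). The core ideas and structure are the same.

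Two small points to tighten. First, your monotonicity argument that $w\in\hat N_u(t_{consider})$ forces $|T_w(t_1)|=i_{min}(t_1)$ needs $t_w\ge t_1$; you only arrange that $v$'s advertisement in $A_u(t_{consider})$ is post-$t_1$, but a stale advertisement from some other neighbor $w$ reporting a size below $i_{min}(t_1)$ would drive $\hat s_u(t_{consider})<i_{min}(t_1)$ and exclude $v$ from $\hat N_u$ without $v$ having connected. The paper avoids this by choosing $t_2$ large enough that $u$'s relevant \textsc{Select} call follows receipt of a post-$t_1$ advertisement from \emph{every} neighbor, not just from $v$. Second, the hash filter only yields $T_z\neq T_v$, not $|T_z(t_1)|>i_{min}(t_1)$ as you assert---a node $z\in S_{min}(t_1)$ with a different token set of the same size could still select and propose to $v$. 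Your conclusion that the resulting connection has nonempty set difference and is therefore productive remains correct, but it follows directly from $T_z\neq T_v$ rather than from any strict size inequality.
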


With the above lemma, for any given time $t_1$, we have shown there is a time interval $[t_1,t_2]$ that is not too long such that each node in $L_{min}(t_1)$ behaves \emph{similarly} to the nodes in the synchronous setting with respect to $G_{min}(t_1)$. We now conclude by showing that this similarity is sufficient to apply the same analysis we used to prove Theorem \ref{thm:syncalg}.

\begin{proof}[Proof (of Theorem \ref{thm:asyncalg})]

The proof of our main theorem follows the same style of argument made by Lemmas \ref{lem:minmatching} through \ref{lem:finishsizeincrease} in our synchronous analysis.
Instead of assuming synchronized rounds, however, we now characterize our algorithm's behavior over
contiguous intervals of length $\ell = O(\delta_{max})$,
where $\ell$ is selected to be long enough to allow
Lemma~\ref{lem:asyncrandom} to apply to the intervals.

Let $t_i$ be the time at which interval $i$ begins.
We treat each interval $i$ like a round defined with respect to the minimum productive subgraph $G_{min}(t_i)$.
 The main difference in this new setting versus the synchronous is that Lemma~\ref{lem:asyncrandom}
 provides only a lower bound on a node in $R_{min}(t_1)$ being selected in interval $i$,
 whereas in the synchronous setting we know the exact probability of this event.
Fortunately, our reworked version of Lemma \ref{lem:prodconns} requires only this lower bound.
Indeed, much of the technical difficulty in reworking this lemma from its original form
was to allow it to require only this lower bound instead of precise probabilities.

\par

In more detail, the only property assumed of the algorithm by Lemma \ref{lem:prodconns} is that a node in $R_{min}(r)$ be selected with probability at least $1-\prod_{u\in N_{min}(v)}(1-1/deg_{min}(u))$. Since this is exactly what we showed in Lemma \ref{lem:asyncrandom} for our asynchronous algorithm, Lemma \ref{lem:prodconns} applies to the graphs corresponding to our intervals.
The remainder of the relevant lemmas in our synchronous analysis require only
that Lemma \ref{lem:prodconns} holds.
We can therefore apply these lemmas to our intervals to obtain a similar complexity
bound for gossiping $k$ tokens, with the only difference being that instead of bounding
the number of rounds, we bound the number of intervals of length $O(\delta_{max})$
that are required.
\end{proof}

\bibliography{main}
\bibliographystyle{plainurl}

\appendix

\subsection{Omitted Proofs from Section \ref{sec:prelim}}

\subsubsection{Proof of Lemma \ref{lem:stochasticdominance}}
\label{proof:stochasticdominance}

For each $i\in [T]$, let $\hat{X}_i$ be an indicator variable that is $1$ with probability $p$ and $0$ otherwise. We now define a process for generating a \textit{coupled} distribution where the values sampled are pairs of bits. Namely, we sample $i$ pairs $(Y_i, \hat{Y}_i)$ where $\hat{Y}_i$ is $1$ with probability $p$ and $0$ otherwise. If $\hat{Y}_i$ is $1$, we set $Y_i$ to 1 as well. Otherwise, we set $Y_i$ to $1$ with probability $(q_i-p)/(1-p)$. In this way, the marginal probability that $Y_i=1$ is exactly $q_i$, the same success probability as our original indicator variable $X_i$. Clearly $\hat{X}_i$ and $\hat{Y}_i$ are also $1$ with the same probability $p$. Therefore, since for any $T$-sequence execution where  $\hat{Y}=\sum_{i=1}^T\hat{Y}_i$  and $Y=\sum_{i=1}^TY_i$ it's true that $Y\geq \hat{Y}$, it follows that as long as $\hat{Y}$ is at least some value, then so is $Y$. In other words, it's sufficient to lower bound the value of $\hat{Y}$ to derive a lower bound for the number of successes in the the series $X_1,\ldots,X_T$.

\par

For the expectation $\E[\hat{Y}]=pT$, we can apply the Chernoff bound from Theorem $\ref{thm:chernoff}$ with $\varepsilon=1/2$ to upper bound the probability $\hat{Y}$ is less than $T/2$,

\begin{align*}
    &\Pr[\hat{Y}\leq(1/2)\cdot pT]=\Pr[\hat{Y}\leq\Theta(pT)]\\
    &\leq\exp\bigg(-\frac{\Theta(pT)}{8}\bigg)=\bigO(\exp(-pT))
\end{align*}
Therefore, with very high probability in $pT$, $Y\geq \hat{Y}=\Omega(pT)$.

\subsection{Omitted Proofs from Section \ref{sec:synch}}
\subsubsection{Proof of Lemma \ref{lem:samesize}}
\label{proof:samesize}

\begin{proof}

When $C(r) = 1$, all nodes $u\in V$ have the same number of tokens, therefore $s_u(r)=|T_v(r)|$ for all edges $\{u,v\}\in E$. Clearly, if not all nodes have $k$ tokens, then there is some node $u$ with some neighbor $v$ such that $H(T_u)\neq H(T_v)$. Furthermore, since $s_u(r)=|T_v(r)|$ for all edges $\{u,v\}\in E$, $v\in N_u(r)$. Therefore, since $|N_u(r)|>0$, $u$ will send a connection proposal to some neighbor $w\in N_u(r)$.

\par

If $w$ receives a connection proposal from $u$, $w$ is guaranteed to accept at least one connection proposal this round and participate in at least one connection this round (initiated by a neighbor's proposal). Therefore, after this round $w$ will possess a new token such that $|T_w(r+1)|>i^*(r)$. There are now two possibilities: $\forall u\in V, |T_u(r+1)|\geq |T_w(r+1)|$ or $\exists u\in V, |T_u(r+1)|< |T_w(r+1)|$. In the first case, $i_{min}(r+1)>i_{min}(r)$ since all nodes possess more than $i_{min}(r)$ tokens. In the second case $C(r+1)>1$ since the chosen node $u$ has fewer tokens than $w$.
\end{proof}

\subsubsection{Proof of Lemma \ref{lem:minmatching}}
\label{proof:minmatching}

Fix the cut $(S_{min}(r), V\setminus S_{min}(r))$ and recall that $n^*_{min}(r)$ is the size of the smaller of the two bipartitions. From Lemma \ref{lem:matching} we know that there is a matching $M$ of size at least $(\alpha/4)\cdot n^*_{min}(r)$ across this cut. Consider an arbitrary edge $\{u,v\}\in M$ and without loss of generality assume $u\in V\setminus S_{min}(r)$ and $v\in S_{min}(r)$.

\par

By the definition of $S_{min}(r)$, no node has fewer tokens than $v$ and therefore $v$ must have the smallest token set size out of all $u$'s neighbors. Since $v\in N(u)$, this means that $s_u(r)=|T_v(r)|$ which implies that $H(T_u(r))\neq H(T_v(r))$. This is sufficient to show that $\{u,v\}$ is in the productive subgraph.

\par

Furthermore, clearly $N(u)\cap S_{min}(r)\neq\emptyset$ and $N(v)\cap (V\setminus S_{min}(r))\neq\emptyset$ and therefore $u\in L_{min}(r)$ and $v\in R_{min}(r)$. Therefore, it must be the case that $\{u,v\}$ is in the minimum productive subgraph as well. Since we can show this for any arbitrary edge $\{u,v\}\in M$, it is true for every such edge in the matching. Therefore $M$ is also a matching of size at least $(\alpha/4)\cdot n^*_{min}(r)$ over $G_{min}(r)$.

\subsubsection{Proof of Lemma \ref{lem:fullprogress}}
\label{proof:fullprogress}

We can now apply the same reasoning as the proof of Theorem 7.2 in \cite{ghaffari:2016} to show that applying Lemma \ref{lem:prodconns} inductively over $\bigO(\log{\Delta})$ rounds achieves our desired number of connections. We summarize this argument here.

\par

Fix a round $r$ and apply the first iteration of Lemma \ref{lem:prodconns} at the beginning of this round. For the $i$th application of Lemma \ref{lem:prodconns}, let $m_i$ be the size of the maximum matching over $G(L,R)$ in this iteration. By the lemma statement, with at least constant probability, either $\Omega(m_i/\log{\Delta})$ nodes of $R_{min}(r)$ participate in a productive connection or we can identify some subgraph $G(L'',R'')$ defined according to the second lemma objective. If the latter holds, this graph $G(L'',R'')$ becomes the new $G(L,R)$ for the $(i+1)$th iteration.

\par

As in \cite{ghaffari:2016}, when applying our lemma inductively we must ensure that this new value for $G(L,R)$ in the $(i+1)$th iteration satisfies the criteria required by the lemma setup. Clearly criteria $1$, $2$, and $4$ are satisfied by properties $a$, $b$, and $d$ of objective 2 of Lemma \ref{lem:prodconns}, respectively. The only property that is not trivially satisfied is therefore $3$. To ensure that $m_i$ is not too small compared to the size $m$ of our original matching over $G(L,R)$, we note that for all $i$, $m_i\geq (1-1/\log{\Delta})^{2i}\cdot m\geq c\cdot m$. Therefore, since $i\leq32\cdot \log{\Delta}$, this expression is made valid by setting $c = \exp(-64)$ (please note that while we make no effort to do so here in lieu of a clearer probabilistic analysis, this constant can certainly be optimized). Therefore, $m_i=\Omega(m)$ for any inductive step $i$ and so if the first objective of the lemma is satisfied on any iteration, at least $\Omega(m/\log{\Delta})$ nodes in $R_{min}(r)$ participate in a productive connection. 

\par

Furthermore notice that after $32\cdot\log{\Delta}$ steps where the second objective of the lemma is satisfied, the degree sum of the final graph is $m_i$. Therefore each node in $L$ only has one neighbor to choose from such that the number of productive connections with nodes in $R_{min}(r)$ is trivially $m_i=\Omega(m_i/\log{\Delta})$. Therefore, it holds that after $32\cdot\log{\Delta}$ steps in which at least one of the lemma objectives is satisfied, at least $\Omega(m/\log{\Delta})$ nodes in $R_{min}(r)$ participate in a productive connection (again where $m$ is the maximum size of the matching over $G_{min}(r)$).

\par

Call any round where at least one of the objectives of Lemma \ref{lem:prodconns} is satisfied a success. Since we know each round is successful with at least constant probability, we can apply the stochastic dominance argument from Lemma \ref{lem:stochasticdominance} to demonstrate (with high probability in $\Delta$) that at most $\bigO(\log{\Delta})$ steps are required to achieve $32\cdot\log{\Delta}$ successes. Therefore at most $\bigO(\log{\Delta})$ rounds are required before $\Omega(m/\log{\Delta})$ nodes in $R_{min}(r)$ participate in a productive connection.

\par

Finally, from Lemma \ref{lem:minmatching} we know that for any fixed round $r$ there is a matching over the minimum productive subgraph of size $(\alpha/4)\cdot n^*_{min}(r)$. Therefore, $m\geq (\alpha/4)\cdot n^*_{min}(r)$ and so at least $\Omega((\alpha/\log{\Delta})\cdot n^*_{min}(r))$ nodes of $R_{min}(r)\subseteq S_{min}(r)$ participate in a productive connection after at most $\bigO(\log{\Delta})$ rounds.
\subsubsection{Helper Lemma to Support Lemma \ref{lem:finishsizeincrease}}

\begin{lemma}
\label{lem:sizeincrease}

Fix a round $r>0$ such that $n-n_{min}(r)\leq n/2$. With high probability in $n$, after at most $\bigO((1/\alpha)\log{n}\log^2{\Delta})$ rounds, more than half of the nodes posses more than $i_{min}(r)$ tokens.
\end{lemma}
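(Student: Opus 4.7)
The plan is to partition rounds into successive phases of length $\Theta(\log\Delta)$ and invoke Lemma \ref{lem:fullprogress} to track the geometric growth of $n - n_{min}$ until it exceeds $n/2$. I may assume throughout the window being analyzed that $i_{min}(r')=i_{min}(r)$ and $n-n_{min}(r')\leq n/2$, because either violation immediately implies the conclusion. Under these assumptions, $n^*_{min}(r') = n - n_{min}(r')$ (the smaller side of the cut is always the complement of $S_{min}$), and every productive connection counted by Lemma \ref{lem:fullprogress} removes a node from $S_{min}$ permanently, since token sets are monotone.

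Call a phase starting at round $r'$ \emph{successful} if the event of Lemma \ref{lem:fullprogress} occurs, i.e., at least $\Omega((\alpha/\log\Delta)\cdot n^*_{min}(r'))$ nodes of $S_{min}(r')$ participate in productive connections. Each phase is successful with some constant probability $p$, and after a successful phase,
\[
n - n_{min}(r' + \Theta(\log\Delta))\ \geq\ \left(1 + c\cdot\frac{\alpha}{\log\Delta}\right)\cdot(n - n_{min}(r'))
\]
for a constant $c>0$. Starting from $n-n_{min}(r)\geq 1$ (which follows from the standing assumption that $C(r)>1$), a routine $\log(1+x)\approx x$ estimate shows that $k = O((\log\Delta/\alpha)\cdot\log n)$ successful phases suffice to drive $n - n_{min}$ past $n/2$.

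To bound the total number of phases required to accumulate $k$ successes, I would apply Lemma \ref{lem:stochasticdominance} with parameter $p$ over $T = \Theta(k/p)$ phases, which yields at least $k$ successes except with probability $\exp(-\Omega(pT)) = \exp(-\Omega(k)) = n^{-\Omega(1)}$, since $k = \Omega(\log n)$. Multiplying by the phase length $O(\log\Delta)$ gives the claimed bound of $O((1/\alpha)\log n\log^2\Delta)$ rounds.

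The main obstacle I anticipate is justifying the use of Lemma \ref{lem:stochasticdominance} despite the substantial dependence between phases: the composition of $S_{min}$, and therefore the structure of $G_{min}$, evolves based on the entire random history, so the per-phase success events are not independent. The saving feature is that Lemma \ref{lem:fullprogress} provides its constant success probability for an arbitrary starting configuration of $G_{min}$; hence no matter what history precedes a given phase, the conditional probability of success is at least $p$. This is exactly the hypothesis $q_i \geq p$ needed to invoke Lemma \ref{lem:stochasticdominance}, closing the argument.
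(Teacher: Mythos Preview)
Your proposal is correct and follows essentially the same approach as the paper: partition time into $\Theta(\log\Delta)$-round phases, invoke Lemma~\ref{lem:fullprogress} per phase, track the growth of $n-n_{min}$ until it exceeds $n/2$, and handle the inter-phase dependence via Lemma~\ref{lem:stochasticdominance}. The only cosmetic difference is that the paper lower-bounds the per-phase progress within a block of $O(\log\Delta/\alpha)$ phases by the value of $n^*_{min}$ at the start of that block (yielding constant-factor growth per block, then $O(\log n)$ blocks), whereas you account for the multiplicative factor $(1+c\alpha/\log\Delta)$ directly per successful phase; both lead to the same $O((\log\Delta/\alpha)\log n)$ successful-phase count.
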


Call a phase $p_i$ of $\bigO(\log{\Delta})$ rounds \emph{successful} if at least $\bigO((\alpha/\log{\Delta})\cdot n_{min}^*(r_i))$ nodes of $S_{min}(r_i)$ participate in a productive connection where $r_i$ is the first round of $p_i$. We observe $t$ successful phases  $p_1,\ldots,p_t$ while at most half the nodes have more than the minimum token set size. Our goal will be to show that there can only be so many of these phases before at least half the nodes in the network possess more than $i_{min}(r_1)$ tokens.

\par

Notice that for any integers $i$ and $j$ such that $i<j$, $n-n_{min}(r_i)\leq n-n_{min}(r_j)$, since we can only \emph{grow} the number of nodes with more than the minimum number of tokens. Furthermore, recall that for any $i$ such that $n_{min}(r_i)\leq n/2$, by definition $n_{min}(r_i)=n^*_{min}(r_i)$. Therefore, if $m_i$ is the size of the maximum matching over $G_{min}(r_i)$, for any $i$ and $j$ where $i<j$ and $n^*_{min}(r_i)\leq n^*_{min}(r_j)\leq n/2$ we know that $m_i\leq m_j$. Therefore (by Lemma \ref{lem:minmatching}) we have that the $i$th successful phase achieves $\Omega((\alpha/\log{\Delta})\cdot n^*_{min}(r_i))\geq\Omega((\alpha/\log{\Delta})\cdot n^*_{min}(r_1))$ productive connections. 

\par

Therefore at most $\bigO(\log{\Delta}/\alpha)$ successful phases are required until the number nodes with more than the minimum number of tokens grows by a constant fraction. We can now group together $T$ sequences of $\bigO(\log{\Delta}/\alpha)$ phases $s_1,\ldots,s_T$ and solve for $T$ such that
\begin{align*}
    n^*_{min}(r)\cdot (1+\Omega(1))^T&\geq n/2\\
\end{align*}

Which yields $T\leq \bigO(\log{n})$ sequences for a total of $\bigO((1/\alpha)\log{n}\log{\Delta})$ total phases. Finally, we bound how many of these phases must pass until until we achieve a sufficient number of successful phases. To this end, we apply the stochastic dominance argument from Lemma \ref{lem:stochasticdominance} using the constant probability lower bound introduced by Lemma \ref{lem:prodconns}. This gives us with high probability in $n$ that our final phase complexity is $\bigO((1/\alpha)\log{n}\log{\Delta})$ phases and our final round complexity is $\bigO((1/\alpha)\log{n}\log^2{\Delta})$ total rounds.

\subsubsection{Proof of Lemma \ref{lem:finishsizeincrease}}
\label{proof:finishsizeincrease}

In Lemma \ref{lem:sizeincrease} we showed that when $n-n_{min}(r)\leq n/2$ for some round $r>0$, by periodically growing the set of nodes with more than the fewest number of tokens by a constant fraction, at most $\bigO((1/\alpha)\log{n}\log^2{\Delta}$ rounds are required until $n_{min}(r)\leq n/2$. A symmetric argument can be made in the case that $n_{min}(r)\leq n/2$ in which we \textit{shrink} the nodes with the fewest number of tokens by a constant fraction until at most a constant number remain.

\par

Let $s_1,\ldots s_T$ be several sequences where $s_i=p_1,\ldots,p_t$ is made up of $\bigO(\alpha/\log{\Delta})$ phases where each phase $p_i$ is made up of $\bigO(\log{\Delta})$ rounds. While in Lemma \ref{lem:sizeincrease} we lower bounded the size of the matching in each phase in each sequence by the first round of $p_1$ of $s_1$, we now lower bound the size of these matchings by the last round of $p_t$ of $s_T$. The maximum matching over the minimum productive subgraph with respect to this round gives us a lower bound on the number of connections achieved in each successful phase up to this round.

\par 

Therefore, in order to solve for $T$ in this case we solve the expression $(n-n_{min}(r))\cdot (1-\Omega(1))^T\leq 1$ for our initial round $r$. Again, this expression indicates for the $i$th sequence we reduce the number of nodes in $S_{min}(r)$ by a constant fraction of $n^*_{min}(r')$ where $r'$ is the round at the beginning of the $(i+1)$th sequence. This again yields $T\leq \bigO(\log{n})$. The rest of the proof is the same as that of Lemma \ref{lem:sizeincrease}, yielding a final round complexity of $\bigO((1/\alpha)\log{n}\log^2{\Delta})$ total rounds to inform all but one node which is then trivially connected to in at most one more round.

\subsubsection{Proof of Lemma \ref{thm:gossiplowerbound}}
\label{proof:gossiplowerbound}
Let $q=n\alpha$. Construct our graph $G$ with vertex expansion at least $\alpha$ by creating a $q$-clique of nodes and connecting the remaining $n-q$ nodes to every node in the $q$-clique. This graph is equivalent to a star graph when $q=1$ ($\alpha=1/n$) and to a clique when $q=n$ ($\alpha=1$).

\par

First, we'll show that this graph has vertex expansion at least $\alpha$. Recall that to find the vertex expansion of a graph, the goal is to minimize the quantity $|\partial(S)|/|S|$ over all cuts $S$ of size at most $n/2$. When considering all possible cuts of our graph, we can choose to either include nodes from the $q$-clique or nodes not in the clique (or both). If we select any node from the $q$-clique, by our construction, every remaining node is now in $\partial(S)$. Therefore, the only freedom we have to minimize $|\partial(S)|/|S|$ is to increase the size of $S$ as to maximize the denominator. However, the minimum value we can derive is still only $1\geq 2\alpha$ since we can include at most $n/2$ nodes in $S$.

\par

Our only remaining option is then to try to minimize $|\partial(S)|/|S|$ by not including any nodes from the $q$-clique in our set $S$. As soon as we include a single node outside the $q$-clique, $|\partial(S)|=q$. Furthermore, we are compelled to include up to $n/2$ nodes this way (since $q\leq 1/2$ so there are at least $n/2$ nodes not in the $q$-clique). This minimizes the target quantity since it has no effect on the numerator and it maximizes the denominator. However, even when the quantity is minimized in this way, it's always the case that $|\partial(S)|/|S|\geq q/(n/2)$. Lastly, since $q=n\alpha$, this quantity is at least $2\alpha$, satisfying the condition of the Lemma statement.

\par

We will now show that it takes $\Omega(k/\alpha)$ rounds to spread $k$ tokens to all $n$ nodes in $G$. We begin by providing all $k$ tokens to every node in the $q$-clique. To solve the gossip problem, all $k$ tokens must be delivered to the $n-q\geq n/2$ nodes not in the clique. This requires at least $kn/2$ total connections to be made. However, since at most $q$ connections can occur per round (since nodes outside the clique aren't connected and the nodes in the clique are limited to at most one connection per round) a total of at least $kn/(2q)$ rounds are required. Substituting $q=n\alpha$ then gives the needed lower bound on the total number of rounds: $\Omega(kn/(2q))=\Omega(kn/(2n\alpha))=\Omega(k/(2\alpha))=\Omega(k/\alpha).$

\subsection{Omitted Proofs from Section \ref{sec:asynch}}
\subsubsection{Proof of Lemma \ref{lem:asyncrandom}}
\label{proof:asyncrandom}

Fix any time $t_1$ and node $v$ as specified in the lemma statement.
Fix $t_2$ to be the minimum time after $t_1$ that is sufficiently large
to guarantee that for every pair of neighbors $\{x,y\}$ in the underlying network topology,
$y$ receives an advertisement from $x$ that was passed to $x$'s {\tt update} method
at some time $t \geq t_1$, and $y$ has time to call \textsc{Select}  after receiving 
at least one such an advertisement.
Clearly, $t_2 = t_1 + O(\delta_{max})$.

We will consider all possible executions of our algorithm over the interval $[t_1, t_2]$. 
We partition these executions into two disjoint event spaces with respect to $v$. The first space, which we will denote $\mathcal{A}$, will contain all executions in which every node $u\in N_{min}(v)$ properly considers $v$ during the interval $[t_1, t_2]$.
(Recall that in the lemma statement we define $N_{min}(v)$ to be the neighbor set of $v$ in $G_{min}(t_1)$, and $deg_{min}(v)=|N_{min}(v)|$.) 
The second space  $\overline{\mathcal{A}}$ will then simply be the complement of $\mathcal{A}$, containing all other executions.

\par

Begin with some execution $a\in \mathcal{A}$. Recall by the definition of $\mathcal{A}$, every node in $N_{min}(v)$ properly considers $v$ during the interval $[t_1, t_2]$ in execution $a$. 
Fix one such neighbor $u\in N_{min}(v)$.
There is some time $t_3$ in our interval
such that at this time, $u$ makes a call to $\textsc{Select}$,
during which $v\in \hat{N}_u(t_3)$ and $|\hat{N}_u(t_3)|\leq deg_{min}(u)$. 
During this call, $u$ will select $v$ for a connection attempt
 with probability $1/|\hat{N}_u(t_3)|\geq 1/deg_{min}(u)$.
 It follows that the probability that $u$ {\em does not} send $v$ a proposal
 is at most $1-1/deg_{min}(u)$.

\par

Since execution $a$ is in the event space $\mathcal{A}$, we know that every node $u\in N_{min}(v)$ properly considers $v$ during this interval. Moreover, given that these nodes properly consider $v$, the probability that two nodes return $v$ from $\textsc{Select}$ is independent (since the selected neighbor is chosen uniformly at random from $\hat{N}$). Therefore, we can bound the probability of event $X$, where $X$ denotes that $v$ receives at least one connection proposal, as follows:
\begin{align*}
\Pr[\neg X]&\leq \Pi_{u\in N_{min}(v)}(1-deg_{min}(u))\\
\Pr[X]&\geq 1-\Pi_{u\in N_{min}(v)}(1-deg_{min}(u))
\end{align*}

Furthermore, by the guarantees of the aMTM, having received at least one connection proposal, $v$ is guaranteed to accept at least one. Therefore, $v$ participates in a productive connection with at least the above probability. 

We now consider some execution $a\in \overline{\mathcal{A}}$. 
By the definition of $\overline{\mathcal{A}}$, there must be in $a$ some node $u\in N_{min}(v)$, such that $u$ does not properly consider $v$ in the interval $[t_1, t_2]$.
Fix $t_{select}$ to be the first time that $u$ calls \textsc{Select}
after receiving an advertisement from $v$ that was passed to {\tt update} at a time greater
than or equal to $t_1$.
By our definition of $t_2$, we can always identify a time $t_{select}$ that satisfies 
these properties in $[t_1, t_2]$.

By assumption,
we know that $u$ does not properly consider $v$ during the call to \textsc{Select} at $t_{select}$.
We consider the two possible reasons for this behavior,
and show in both cases $v$ must have already participated in a productive connection between $t_1$ and $t_{select}$.

The first possible reason is that $v\notin \hat{N}_u(t_{select})$.
By the definition of our algorithm, the minimum token set size in the network can never {\em decrease}.
It follows that if $v\notin \hat{N}_u(t_{select})$, then $v$ must have learned at least one
token since $t_1$. It follows that $v$ participated in a productive connection since $t_1$.

The second reason that $u$ might not properly consider $v$ would be if $v\in \hat{N}_u(t_{select})$,
but $\hat{N}_u(t_{select})$ is too large such that $|\hat{N}_u(t_{select})| > deg_{min}(u)$.
However, at time $t_1$, exactly $deg_{min}(u)$ neighbors of $u$ had a token set size of $i_{min}(t_1)$ (by the definition of $G_{min}(t_1)$), with all other neighbors of $u$ in $G$ having strictly more tokens. Since 
nodes cannot lose tokens, the number of $u$'s neighbors with at most $i_{min}(t_1)$ tokens can never increase.
If $|\hat{N}_u(t_{select})| > deg_{min}(u)$,
then $\hat{s}_u(t_{select})>i_{min}(t_1)$,
from which it follows that $v$, along with all of $u$'s neighbors with token set size
$i_{min}(t_1)$ at $t_1$, must have received at least one token since $t_1$,
meaning it participated in a productive connection.

\par

We have shown, therefore,
that for any $a\in \overline{\mathcal{A}}$, 
$v$ participates in a productive connection in $[t_1, t_2]$ in $a$ with probability $1$.
Pulling together these pieces,
we have partitioned the possible executions in the interval $[t_1, t_2]$ into two sets.
In both sets, the probability of $v$ participating in a productive connection
is at least $1-\prod_{u\in N_{min}(v)}(1-deg_{min}(u))$, as required by the lemma statement.

\subsection{Swift Implementation of the Asynchronous Random Diffusion Gossip Algorithm}
\label{apx:implementation}

\begin{Figure}
\label{fig:swiftimplementation}

\begin{lstlisting}[language=swift]
init() {
  self.tokens = Set<String>()
  // standard implementation
  self.hash = Utils.SHA256
}

func getTag() -> Advertisement {
  let tokenSetHash = self.hash(tokens)
  let tokenSetSize = self.tokens.count
  // iOS device identifier
  let uid = UIDevice.current.name
  return Advertisement(tokenSetHash
    , tokenSetSize, uid)
}

func select(advertisements
  : [Advertisement])
    -> String {
  let tokenSetHash = self.getTag()
    .tokenSetHash
  
  var minSize = Int.max
  for adv in advertisements {
    let size = adv.tokenSetSize!
    if  (size < minSize)
      && (adv.tokenSetHash
        != tokenSetHash) {
      minSize = size
    }
  }
  
  var neighbors = Set<String>()
  for adv in advertisements {
    if ((adv.tokenSetSize! == minSize) 
      && (adv.tokenSetHash
        != tokenSetHash)) {
      neighbors.insert(adv.uid)
    }
  }
  // returns device uid
  return neighbors.randomElement()!
}

func communicate(neighborTokens:
  Set<String>) {
  let difference = neighborTokens
    .subtracting(self.tokens)
  self.tokens
    .insert(difference.randomElement()!)
}
\end{lstlisting}
\end{Figure}

\end{document}